\newcommand\eye{\ensurestackMath{\stackinset{c}{}{c}{-.33pt}%
		{\bullet}{\bigcirc}}}
\newcommand{\removelatexerror}{\let\@latex@error\@gobble}
\newcounter{cases}
\newcounter{subcases}[cases]
\newenvironment{mycase}
{
	\setcounter{cases}{0}
	\setcounter{subcases}{0}
	\newcommand{\case}
	{
		\par\indent\stepcounter{cases}\textbf{Case \thecases.}
	}
	
}
{
	\par
}
\renewcommand*\thecases{\arabic{cases}}
\DeclareMathOperator*{\argmin}{argmin}
\newtheorem{theorem}{Theorem}
\newtheorem{lemma}[theorem]{Lemma}
\newtheorem*{lemma*}{Lemma}
\newtheorem*{theorem*}{Theorem}
\newtheorem{definition}{Definition}
\newtheorem{example}{Example}
\def\BibTeX{{\rm B\kern-.05em{\sc i\kern-.025em b}\kern-.08em
		T\kern-.1667em\lower.7ex\hbox{E}\kern-.125emX}}
\begin{document}
	\title{Timed Alignments with Mixed Moves
		\thanks{Neha Rino was funded by the International Master's Scholarships Program IDEX of Université Paris-Saclay.}
}

\author{
	\IEEEauthorblockN{Neha Rino}
	\IEEEauthorblockA{\textit{LMF, ENS Paris-Saclay,}\\
		\textit{CNRS, Université Paris-Saclay, Inria} \\
		Gif-sur-Yvette, France\\
		\url{neha.rino@ens-paris-saclay.fr}}
	\and
	\IEEEauthorblockN{Thomas Chatain}
	\IEEEauthorblockA{\textit{LMF, ENS Paris-Saclay,}\\
		\textit{CNRS, Université Paris-Saclay, Inria} \\
		Gif-sur-Yvette, France\\
		\url{thomas.chatain@ens-paris-saclay.fr}\\
		ORCID 0000-0002-1470-5074}
}

	\maketitle
	
	\begin{abstract}
		The subject of this paper is to study conformance checking for timed models, that is, process models that consider both the sequence of events in a process as well as the timestamps at which each event is recorded. 
		Time-aware process mining is a growing subfield of research, and as tools that seek to discover timing related properties in processes develop, so does the need for conformance checking techniques that can tackle time constraints and provide insightful quality measures for time-aware process models. 
		In particular, one of the most useful conformance artefacts is the alignment, that is, finding the minimal changes necessary to correct a new observation to conform to a process model.  
		This paper follows a previous one, where we have set our problem of timed alignment. In the present paper, we solve the case where the metrics used to compare timed processes allows mixed moves, i.e. an error on the timestamp of an event may or may not have propagated to its successors, and provide linear time algorithms for distance computation and alignment on models with sequential causal processes. 
	\end{abstract}
	
	\begin{IEEEkeywords}
		Conformance checking, Alignments, Timestamps, Time Petri nets
	\end{IEEEkeywords}
	
	\section{Introduction} 
	
	\subsection{Conformance Checking and Alignments} 
	
	Process mining studies vast systems through their event logs, and seeks to extract meaningful ways to model the underlying patterns or processes that govern the behaviour of the system in order to better understand it, or predict future behaviour \cite{A16}.
	Once such a process model is obtained, it is natural to ask how one is sure the obtained model is a reasonable approximation of the system’s behaviour at all, especially given the lack of explainability in the blackbox approach ML takes to producing solutions.
	This is where conformance checking comes into the picture, as it is the art of judging the performance of a process model by relating modelled and observed behaviour of a process to each other, without depending on the origin of the model \cite{CDSW18}.
	Observed behaviour comes in the form of traces in an event log, as a sequence of events occurring during the functioning of the system, while process models are blueprints that describe what the underlying processes of any given system are supposed to look like.
	We often do not want the system to always precisely generate any and all possible future system behaviour, but neither should they simply regurgitate the event log and accept no new system behaviours. 
	What is much more useful is a process model that can, up to some small error factor, approximate any reasonable future system behaviour. 
	
	In Arya Adriansyah's seminal thesis \cite{Adriansyah2014AligningOA}, we obtain the notion of an alignment, that is, the minimal series of corrections needed to transform an observed event trace into the execution of the process model that most closely mimics it.  
	This is given as a series of edits, usually insertions or deletions, that transform the observed trace into a process trace.  
	Alignments thereby help pinpoint exactly where inevitable deviations from expected behaviour occur, and the more distant the aligning word of a model is to its observed trace, the worse the model is at reflecting real system behaviour. 
	
	In the untimed case, process models can be represented using a variety of formal objects, such as Petri nets. 
	Assuming the event logs are a list of words over a finite alphabet (the set of possible discrete events), the problem of calculating the alignment has been extensively studied \cite{Adriansyah2014AligningOA} \cite{BCC21}. 
	The notion of distance used on these words used is usually either Hamming distance or Levenshtein's edit distance. 
	It is natural to want to study explicitly timed systems,
	as by considering events along with their timestamps when mining processes, we can glean information about the minimum delay between two events, the maximum duration the system takes to converge upon a state, or check deadlines, all of which are highly relevant in real world applications \cite{Cheikhrouhou2014TheTP} \cite{Eder1999TimeCI} \cite{inbook}. 
	Time-aware process mining seeks to study both what sort of underlying processes govern system behaviour, and what sort of time constraints they can impose on when certain events can occur \cite{RMAW13} \cite{CRHA20}  \cite{AS21}. 
	In addition, one may want to predict the timestamps of processes \cite{SSA11}. 
	In the process mining community, there are ways to use existing process model notation in order to denote time constraints. BPMN 2.0 comes equipped with \emph{timer events} and can record absolute, relative, and cyclical time constraints. 
	For our purposes, we use time Petri nets, an extension of Petri nets equipped with the ability to express constraints on the duration of time between an action being enabled, and its actual occurrence. In particular, in this paper, we restrict ourselves time Petri nets with no branching points. 
	
	As time-aware process mining grows popular, new quality measures and conformance checking techniques must be developed that are sensitive to temporal constraints, but so far in the study of alignments as a conformance checking artefact, we notice that the process model used is never time-aware. 
	This paper seeks to investigate one of the distance functions, $d_N$, defined in \cite{CR22} and solve the alignment problem for the same. 
	
	\subsection{Timed Alignments}
	
	\begin{example} Consider a model of composing and sending an email, where the intervals signify allowed durations between the current event and its immediate predecessor, depicted below with input and output places marked with an $i$ and an $o$ : 
		
		\adjustbox{scale = 0.9}{
			\begin{tikzcd}
				\eye_i & {\overset{[1,1]}{\fbox{dispatch}}} & \bigcirc & {\overset{[5, 5]}{\fbox{sent}}} & \bigcirc_o \\
				{\underset{[1, \infty]}{\fbox{draft}}} & {\underset{[0,4]}{\fbox{unsend}}} & \eye
				\arrow[curve={height=-12pt}, from=2-1, to=1-1]
				\arrow[curve={height=-12pt}, from=1-1, to=2-1]
				\arrow[from=1-1, to=1-2]
				\arrow[from=1-2, to=1-3]
				\arrow[from=1-3, to=1-4]
				\arrow[from=1-3, to=2-2]
				\arrow[from=2-3, to=2-2]
				\arrow[from=2-3, to=1-4]
				\arrow[from=1-4, to=1-5]
				\arrow[from=2-2, to=1-1]
			\end{tikzcd}
		}\\
		One process trace is $(draft, 5)(dispatch, 6)(sent, 11)$, which depicts drafting the message in 5 units and then dispatching it, and having it send successfully. An example of an observed trace which does not conform to the process above would be $(draft, 3)(draft, 5)(dispatch, 7)(dispatch, 7)(sent, 12)$. Clearly, there is an extraneous letter here, as the same message cannot be dispatched twice, and the timestamps for dispatch and sending come too late. One possible optimally close process trace for this observation is $(draft, 3)(draft,6)(dispatch, 7)(sent, 12)$, which deletes the extra $dispatch$, and extends the second $draft$ event by a unit, allowing the rest of the trace to thereby be in time. 
	\end{example}
	
	In \cite{CR22} we posed the general alignment problem, which formalised the notion of aligning timed traces to timed process models.  
	
	\begin{definition}[The General Alignment Problem]
		Given a process model $N$ denoted by a Time Petri Net and an observed timed trace $\sigma$ we wish to find a timed word $\gamma \in \mathcal{L}(N)$ such that $d(\sigma, \gamma) = \min_{x \in \mathcal{L}(N)} d(\sigma, x)$ for some distance function $d$ on timed words. 
	\end{definition}
	
	In this paper, we restrict ourselves to the study of the alignment problem on a simpler model for time-aware processes, ones that we define as sequential process models. These essentially lose the ability to express parallel events the way time Petri nets can, but retain the ability to reason about time constraints on the durations between any two consecutive events. In addition, we focus on the mixed moves distance $d_N$ defined in \cite{CR22}, that is, a distance where edits can either stay local to a given event or propagate forward into the event's causal future. The last assumption we make is that the untimed part of the observed trace conforms to the process, i.e, the only part that requires aligning is the timestamps. This allows us to focus on the timing aspect of the problem, although once this is solved we can easily adapt existing untimed alignment methods like those described in \cite{Adriansyah2014AligningOA} or \cite{BCC21} to align words with both action labels and timestamps that need editing. This brings us to the following formal problem : 
	
	\begin{definition}[The Purely Timed Alignment Problem for Sequential Process Models]
		Given a sequential process model $N$ denoted by a Time Petri Net and an observed timed trace $(w, \sigma)$, such that $w \in Untime(\mathcal{L}(N))$ we wish to find a valid timestamp sequence $\gamma$ such that $d(\sigma, \gamma) = \min_{x \in \mathcal{L}(N)} d(\sigma, x)$.
	\end{definition}
	
	We informally look at an instance of the purely timed problem below : 
	
	\begin{example}
	Consider the process model $N_1$:  
	\adjustbox{scale = 0.9}{
		\begin{tikzcd}
			&{}&& {}&&{} & \\
			\eye &\fbox{d}& \bigcirc & {\fbox{e}} & \bigcirc & \fbox{f} & \bigcirc_f\\
			\arrow[from=2-1, to=2-2]
			\arrow[ from=2-2, to=2-3]
			\arrow[from=2-3, to=2-4]
			\arrow[from=2-4, to=2-5]
			\arrow[from=2-5, to=2-6]
			\arrow[from=2-6, to=2-7]
			\arrow["{[1,3]}"{description, pos=0.2}, draw=white, from=2-2, to=1-2]
			\arrow["{[1,4]}"{description, pos=0.3}, draw=white, from=2-4, to=1-4]
			\arrow["{[0, 3]}"{description, pos=0.2}, draw=white, from=2-6, to=1-6]
		\end{tikzcd}
	}
Now, say we observed the trace $(u_1, \sigma_1) = (d, 4)(e, 6)(f, 6)$ that did not fit the model, and we wished to analyse how best to modify them to fit them back into the model. We note that the untimed part does match the model's specification, and in order to make the timestamps fit we can simply edit the timestamp of $d$ to give the following process trace $(v_1, \gamma_1) = (d, 4)(e, 6)(f, 6)$. 
	
	
	\end{example}

	In sections \ref{2} and \ref{3} of this paper we will focus on the problem of computing the distance function $d_N$ and in section \ref{4}, we return to the problem of aligning sequential process models to observed traces. We present linear time algorithms for both computing $d_N$ and solving the alignment problem in this setting, but once again only for linear timed words, and sequential process models.  
	
	\section{Preliminaries : Edit Moves for Timed Words}\label{2}
	
	Much like Levenshtein's edit distance, popularly used in the untimed case of the alignment problem, we view the definition of these distances as an exercise in cost minimisation over the set of all transformations between two words. 
	In order to formalise the same, we define what the valid moves of such a transformation could be. 
	We define \textit{moves} as functions that map one time sequence to another. 
	What sort of functions are useful notions of transformation on a timed system? 

	\begin{example}
		Say we sought to align the timed trace $(4,6,6)$ with $(3, 6, 6)$. 
		It feels reasonable to say the distance between the two is 1. 
		A way to arrive at this conclusion is that if the first timestamp  were just shifted to fire at 3 instead, the whole process would match. This sort of local, almost typographical error can often happen in systems, and it is the simplest kind to fix.  
		
		Now consider aligning the timed traces $(4, 8, 11)$ and $(3, 7, 10)$. 
		Now, when trying to compare these firing sequences, we can view it just like we did for the previous pair, as all of the timestamps firing later than they should, and so moving each timestamp back once, giving an aligning cost of 3. There is however, another way to see this deviation. This cascading chain of errors can be fixed if the first event is moved back to fire at 4, and all the relative relationships between it and its successors are preserved. This views the second and third tasks as only caring about when the first ended, which makes sense, because they are causally linked and hence do only start once their predecessor ends. This means, the switch from the timestamp series $(4, 8, 11)$ to $(3, 7, 10)$ can be viewed as only a cost 1 edit in this sense. This is a slightly more complex error to conceive of, but it is of natural practical use, as if a delay at the beginning caused the whole process to occur too late, the important thing to fix is just the delay at the beginning, and one can sometimes assume with that sorted, the rest of the process will now conform to the model as needed. 
	\end{example}
	
	Based on the above example, we naturally arrived at two types of moves in \cite{CR22}.  
	
	We define a \textit{stamp move} as a move that translates the timing function only at a point, i.e., that edits a particular element of the timestamp series $\tau$. 
	Calculating $d_t$ between two traces can be viewed as a cost minimization process in aligning the traces using only stamp moves. 
	
	\begin{definition}[Stamp Move]
		Given a timing function $\gamma : \{1, \dots n\} \to \mathbb{R}$, formally, we define this as : 
		
		$\forall x \in \mathbb{R}, i \leq n : stamp(\gamma, x, i) =\gamma' $ where
		$$\forall i \leq n : \gamma'(j)  = \begin{cases}
			\gamma(j) + x & j = i \\
			\gamma(j)& otherwise
		\end{cases}$$
		
	\end{definition}
	
	The next type of move we describe is the more novel and interesting \textit{delay move}. 
	By formulating this type of edit move, we seek to leverage the structure of the process model itself, by reflecting the causal relationships between events. 
	
	A stamp move is purely local, in that when a stamp edit occurs at $e$ its immediate successor events shift their relationship with $e$, thereby ensuring that the change in $e$'s timestamp does not derail the rest of the system. 
	On the other hand, any delay move at $e$ will preserve relative relationships in the future, at the cost of shifting the timestamp of every causal descendent of $e$ by the same amount. 
	
	This brings us to the following definition : 
	
	\begin{definition}[Delay Move]
		Given a timing function $\gamma : \{1, \dots n\} \to \mathbb{R}$, formally, we define this as : 
		
		$\forall x \in \mathbb{R}, i \leq n : stamp(\gamma, x, i) =\gamma' $ where
		$$\forall i \leq n : \gamma'(j)  = \begin{cases}
			\gamma(j) + x & j \geq  i \\
			\gamma(j)& otherwise
		\end{cases}$$
		
	\end{definition}
	
The cost of a move is the magnitude $|x|$ in the above definitions, and the cost of a sequence of moves is the sum of the costs of the moves. Armed with these definitions, three natural notions of distance can be constructed. 
	
	\begin{definition}[Stamp Only Distance : $d_t$]
		Given any two timing functions $\tau_1, \tau_2$ over the same causal process $(CN, p)$, we define the stamp-only distance $d_t$ as follows : 
		\small	$$d_t(\tau_1, \tau_2) = \min \{cost(m) | {m \in Stamp^*}, {m(\tau_1) = \tau_2}\}$$
	\end{definition}
	
	\begin{definition}[Delay Only Distance : $d_\theta$]
		Given any two timing functions $\tau_1, \tau_2$ over the same causal process $(CN, p)$, we define the delay-only distance $d_\theta$ as follows : 
		\small	$$d_\theta(\tau_1, \tau_2) = \min \{cost(m) | {m \in Delay^*}, {m(\tau_1) = \tau_2}\}$$
	\end{definition}
	
	\begin{definition}[Mixed Moves Distance : $d_N$]
		Given any two timing functions $\tau_1, \tau_2$ over the same causal process $(CN, p)$, we define the mixed move distance $d_N$ as follows 
		\small	$$d_N(\tau_1, \tau_2) = \min \{cost(m) | {m \in (Stamp \cup Delay)^*}, {m(\tau_1) = \tau_2}\}$$
		
	\end{definition}

	\begin{example}\label{stamptoy}
		Let us try to align the observed trace $(0, 3, 4)$ to the process trace $(0.5, 2.5, 3.5)$
		
		The best $d_t$ alignment for the example is a cost 1.5, with stamp moves editing each position. 
		
		The best $d_\theta$ alignment for the above pair is cost 1.5, as it requires a 0.5 delay edit at the first place to push the 0 forward to a 0.5, and then another 1 delay at the second position to pull the rest of the trace from $(3,4)$ back to $(2,3)$. 
		
		Now we consider the mixed setting. 
		Intuitively, when dealing with linear models where the ripple effect of a delay is always to the right irrespective of the values of the timestamps, moves can be applied in any order, so we can assume a minimum cost run is chronological. 
		So our first move will have to incur a minimum cost of $0.5$ as we try to align $(0,3,4)$ to our process trace at the first position.
		Now, if any part of this initial move was a delay, we would push the second component even further away from $2.5$ than it already was, and if any of the delay were negative, then the mixed move seems counterproductive at the first component as the delay and stamp moves work against each other, so let us say the first move was pure stamp. 
		Now we move on to move two, which will again incur a minimum cost of $0.5$. 
		Now, if the stamp part of this move were positive, it would leave the third component further off from the goal $3.5$ than if all of the move was delay, and so a pure delay move seems best here. 
		Hence, the least cost alignment is $(0.5, 0, 1)(0, 0.5, 2)$, with total cost 1. 
		
		All the intuitive reasoning given above will be justified in subsequent sections, and in fact distance 1 is the best we can do here even in the mixed case. 
	\end{example}

	\section{Computing $d_N$ on traces}\label{3}
	\subsection{Notation and Setup}
	Given the nonconstructive nature of the definition of $d_N$, it is not clear how one can efficiently calculate the distance between two fixed timed traces, as a minimal cost sequence of moves is not obvious. 
	As before, we first study the problem over sequential causal processes, as these make analysing the effects of moves significantly simpler. 
	Before we propose an algorithm that does calculate minimal cost for linear timed traces, we first define a few properties that seem to characterise classes of well formed minimal cost runs, and then prove that these properties both improve cost, and are satisfied solely by the run calculated by the algorithm we provide. 
	
	We start with some convenient notation for a common combination of the previously defined moves. 
	
	\begin{definition}[Mixed Move]
		We define \textit{mixed moves}, that denote doing a stamp move and a delay move at the same position in the word. 
		We define their effect as : $$(s, d, e)(\gamma) = stamp(delay(\gamma, d, e), s, e)$$
		Let the set of all mixed moves be $Moves$. 
	\end{definition}
	
	Given any move $m \in Moves$ we define the function $cost : Moves \to \mathbb{R}^+$ that returns the cost of the move.  
	The cost of a mixed move $(s, d, e)$ is the sum of the cost of the stamp and delay moves it's made up of, $|s| + |d|$.
	
	We say a sequence of moves $m_1m_2\dots m_n = m \in Moves^*$ \textit{aligns} $\tau_1$ to $\tau_2$ if $$m(\tau_1) = m_n(\dots (m_2(m_1(\tau_1))\dots) = \tau_2$$
	
Now, we look at a new way to represent timing functions. With delay edits, which constitute a new way of thinking about how a time-series can be transformed, comes a new perspective with which we can view timing functions over causal processes. 
	There is of course the standard definition, $\tau : \{1, \dots n\} \to \mathbb{R}^+$ that assigns to each event a timestamp that records exactly when the event occurs. 
	
	Instead, thinking along the lines of delays and durations between events occuring, we will often benefit from considering the following representation when speaking about delay moves, and so we define a way by which to view a timed word not in terms of its absolute timestamps, but by the delays between relevant timestamps. 
	
	\begin{definition}[Flow Function]
		Given a (not necessarily valid) time sequence,  $\tau : \{1, \dots n\} \to \mathbb{R}^+$, we first define the \textit{flow} function of $\tau$, $f_{\tau} : \{1, \dots n\} \to \mathbb{R}^+$ such that 
		$$f_{\tau}(i) = \begin{cases} 
			\tau(i) & i = 1\\
			\tau(i) - \tau(i-1) &i > 1\\
		\end{cases}
		$$
		
	\end{definition}
	
	The flow function thereby is a dual representation of timing functions, much like the relationship between graphs and line graphs, as if a timing function traditionally labels its transition nodes with timestamps, the flow function labels edges leading up to transitions with the duration since said transition was enabled. 
	
	\begin{example}\label{wordandflow}
		
%
%
		Consider the word $w = (abc)(1, 5, 9)$. For this linear timed trace, the flow function measuring its successive delays is $(1, 4, 4)$. 
	\end{example}
	
	As defined, $f_\tau$ as defined produces exactly the time durations that the guards of each transition in the model checks, i.e. the clock function values during the run.
	Hence, we see that if a word is in the language of the model, then its $f$ function maps events to values that lie within the constraint that the event's corresponding transition demands, that is, it is perfectly aligned with the model with distance 0. 
	This condition is unfortunately not sufficient though, due to urgency concerns, but it is quite close to the exact condition necessary for a word to be in the language of a time Petri net. 
	
	Also note that given the underlying causal process and the resulting $f_\tau$, we can reconstruct $\tau$ quite straightforwardly as $$\forall i \leq n : \tau(i) = \sum_{j \leq i} f_\tau(j)$$
	
	With a new representation, it is natural to study how the transformations we have defined earlier on timing functions transform their respective flow functions. A stamp move at an event $e$ of value $+x$ will increase the duration between $e$ and its immediate enabling predecessor, while shrinking the duration between $e$ and its immediate enabled successor by an equal amount, thereby moving $e$ by $x$ and compensating appropriately so none of the future timestamps change at all. 
	On the other hand, a delay move at $e$ will increase the duration between $e$ and its immediate enabling predecessor, keeping all other durations constant, thereby ensuring $e$ and all of its causal descendents have their timestamps shifted by $x$. 
	So both these moves are much more local as seen by the flow function, effectively perturbing only the duration attached to $e$ or its immediate enabled successor, while in the timing function representation moves at a position could have consequences well beyond its immediate neighbours. 
	This observation is summarised by the following lemma : 
	
	\begin{lemma}\label{FlowFuncTrans}
		Given a flow function $f_{\tau_1}$ and a mixed move $(s, d, i)$, the consequence of performing the move is the flow function defined below, where $\tau_2 = (s, d, i)\tau_1$. 
		$$f_{\tau_2}(j) = \begin{cases} 
			f_{\tau_1}(j) + s + d & j = i\\
			f_{\tau_1}(j) - s  & j = i-1\\
		\end{cases}
		$$
		
	\end{lemma}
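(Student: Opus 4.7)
The plan is to prove the formula by direct computation, unfolding the mixed move as the composition of its stamp and delay components and then substituting into the definition of the flow function.

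First I would expand $(s,d,i)(\tau_1) = stamp(delay(\tau_1, d, i), s, i)$ to obtain an explicit pointwise description of $\tau_2$. Composing the stamp and delay definitions given earlier yields three regimes: $\tau_2(j) = \tau_1(j)$ for $j < i$, $\tau_2(i) = \tau_1(i) + s + d$ at the edited index, and $\tau_2(j) = \tau_1(j) + d$ for $j > i$. This step is a one-line verification from the two definitions.

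Next I would substitute this formula into $f_{\tau_2}(j) = \tau_2(j) - \tau_2(j-1)$ for $j > 1$, together with $f_{\tau_2}(1) = \tau_2(1)$, and perform a case split on the position of $j$ relative to $i$. The telescoping is almost immediate: when $j < i$ or $j > i+1$, the constant delay shift $d$ is applied identically to $\tau_2(j)$ and $\tau_2(j-1)$ and cancels, so $f_{\tau_2}(j) = f_{\tau_1}(j)$. At $j = i$, the shift $s+d$ affects only the minuend, yielding $f_{\tau_2}(i) = f_{\tau_1}(i) + s + d$, which matches the first case of the claimed formula. At the index one step beyond $i$, the delay shift still cancels between $\tau_2(j)$ and $\tau_2(j-1)$, while the stamp shift $s$ remains only in the subtrahend, leaving a residual of $-s$ and giving the second case of the formula.

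There is no real obstacle; the only care required is with the boundary cases $j = 1$ (verifying the formula separately depending on whether $i = 1$, using $f_{\tau_2}(1) = \tau_2(1)$) and $i = n$ (for which the adjacent case is vacuous). Intuitively, this calculation is precisely the sense in which the flow representation makes the moves \emph{local}: a mixed move perturbs at most two consecutive coordinates of $f_\tau$, whereas on $\tau$ itself the delay component has long-range effects on every causal successor of $i$.
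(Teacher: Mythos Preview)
Your proposal is correct and is exactly the computation the paper intends. The paper gives no formal proof of this lemma; it is stated immediately after a paragraph of prose that describes the same calculation in words (a stamp move at $e$ increases the duration before $e$ and shrinks the duration after $e$, while a delay move only increases the duration before $e$), and the lemma is then left as the evident consequence of the definitions. Your unfolding of $(s,d,i)$ into the three regimes $j<i$, $j=i$, $j>i$ and the subsequent case split on $f_{\tau_2}(j)=\tau_2(j)-\tau_2(j-1)$ is precisely that verification made explicit.

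One point worth flagging: your computation places the second perturbed coordinate at ``the index one step beyond $i$'', i.e.\ $j=i+1$, whereas the lemma as printed writes $j=i-1$. Your version is the correct one. From $f_{\tau}(j)=\tau(j)-\tau(j-1)$ and the fact that a stamp move at $i$ changes only $\tau(i)$, the affected flow entries are $f_\tau(i)$ and $f_\tau(i+1)$, never $f_\tau(i-1)$. The paper's surrounding prose (``shrinking the duration between $e$ and its immediate enabled successor'') and every later use of the lemma (e.g.\ in the cross co-operation and stability arguments, where the stamp at position $i$ interacts with the delay at position $i+1$) are consistent with $j=i+1$; the $i-1$ in the displayed formula is simply a typo. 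So your argument in fact proves the statement the paper needs.
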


	In order to compute the distance $d_N$ between two timed traces, we wish to cull down the space of possible sequences of moves that transform the two words to each other to smaller, equivalent possibility spaces that still capture all the behaviours we're interested in, while being a little more well behaved. 
	Hence, we introduce the following properties, which are designed intuitively to be sequences of moves that still perform the same effective transformations, and generally lower the cost of the transformation too by choosing moves wisely to avoid inefficiency. 
	These properties (order, co-operation, and stability) have been chosen specifically to uniquely characterise the runs of the algorithm we devise to compute $d_N$ on pairs of linear traces. 
	
	\subsection{Chronology}
	The first property captures the order in which moves in a sequence target the positions in a linear trace. 
	It would be good to be able to focus our attention on sequences of moves that proceed in an orderly fashion from one end of the word to another, doing exactly one mixed move at each position. This is intuitively reasonable as doing multiple mixed moves at the same position can only worsen the cost if it at all changes it, and assuming that mixed moves commute on linear words, it would be all the same to arrange them in order of the position it takes effect at. This gives us the following two properties. 
	
	\begin{definition}[Chronology]
		A sequence of moves aligning $\gamma$ to $\sigma$ is said to be \textit{chronological} if for all positions $i < j \leq n = |\gamma|$, all the moves at position $i$ are performed before any move at position $j$ and exactly one mixed move (where one or both components may be zero) takes place at each position, that is $\rho \in Moves^*$ is chronological iff  $$\forall i \in \{1, 2, \dots n\},  \exists s_i, d_i \in \mathbb{R} :$$$$ \rho = (s_1, d_1, 1)(s_2, d_2, 2)\dots (s_n, d_n, n) $$
		Such that $\forall i \in \{1, 2, \dots n\} : \sigma_i = \gamma_i + s_i + \sum_{j=1}^{i} d_j$. 
		
	\end{definition}
	
	\begin{definition}[Reverse Chronology]
		A sequence of moves aligning $\gamma$ to $\sigma$ is said to be \textit{reverse chronological} if for all positions $i < j \leq n = |\gamma|$, all the moves at position $i$ are performed after any move at position $j$ and exactly one mixed move (where one or both components may be zero) takes place at each position, that is $\rho \in Moves^*$ is chronological iff  $$\forall i \in \{1, 2, \dots n\},  \exists s_i, d_i \in \mathbb{R} : $$$$ \rho = (s_n, d_n, n)(s_{n-1}, d_{n-1}, n-1)\dots (s_1, d_1, 1) $$
		Such that $\forall i \in \{1, 2, \dots n\} : \sigma_i = \gamma_i + s_i + \sum_{j=1}^{i} d_j$. 
	\end{definition}
	
	Clearly, all chronological sequences have a one to one correspondence with reverse chronological sequences with exactly the same cost, the map between the two being simply reversing the sequence of moves. 
	In addition, in the flow function representation, the notions of chronology and reverse chronology are still somewhat preserved. 
	A mixed move affects both the duration before and all those after the event the move is applied at, so the order in which they happen does indeed stay chronological or reverse chronological as the case may be. 
	The only point of nuance here is that of course, a prefix of a sequence of say chronological moves no longer has the property of having completely aligned the word up to the i-th element of the flow vector while leaving the rest unchanged, as the last edit move may have perturbed the (i+1)st element. 
	
	But the upshot is, in a reverse chronological sequence of moves, the prefix of the sequence that stops at the ith event, combined with the stamp component of the mixed move at position (i-1), completely aligns the flow function for the suffix starting at position i. 
	Reverse chronological runs do not have an equivalent partial completion property in the timing function setting, but due to how local the moves end up being in the flow function setting, we obtain it here. 
	
	\textbf{Note : } For simplicity, we always assume the move played on the last position of a trace henceforth is a pure delay. Even if part of it were stamp, it would not affect the flow or timing representation any differently, so this is safe to assume. 
	
	\begin{example}
		
		Given $\gamma = (1, 1, 2, 4, 5)$ and $\sigma = (1, 2, 2.5, 4.2, 5)$, an example of a cost 5.3 non-chronological sequence of moves aligning them is as shown below. 
		$$\gamma \xmapsto{(-1, 0, 1)} (0, 1, 2, 4, 5) \xmapsto{(0, 2, 1)} (2, 3, 4, 6, 7) \xmapsto{(0, -1, 1)} $$
		$$(1, 2, 3, 5, 6) \xmapsto{(0.3, -0.8, 3)} (1, 2, 2.5, 4.2, 5.2) \xmapsto{(0, -0.2, 5)} \sigma$$
		There is an improved cost 3.3 chronological sequence of moves obtained by reordering the above sequence and combining moves at the same position : 
		$$\gamma \xmapsto{(-1, 1, 1)} (1, 2, 3, 5, 6) \xmapsto{(0, 0, 2)} (1, 2, 3, 5, 6) \xmapsto{(0.3, -0.8, 3)}  $$
		$$ (1, 2, 2.5, 4.2, 5.2) \xmapsto{(0, 0, 4)} (1, 2, 2.5, 4.2, 5.2) \xmapsto{(0, -0.2, 5)} \sigma $$
		
		There corresponding reverse chronological sequence of moves is of course : 
		$$\gamma \xmapsto{(0, -0.2, 5)} (1, 1, 2, 4, 4.8) \xmapsto{(0, 0, 4)} (1, 1, 2, 4, 4.8) \xmapsto{(0.3, -0.8, 3)} $$
		$$(1, 1, 1.5, 3.2, 4)\xmapsto{(0, 0, 2)} (1, 1, 1.5, 3.2, 4)\xmapsto{(-1, 1, 1)} \sigma $$
		
	\end{example}
	
	As the above example suggests, we claim : 
	
	\begin{lemma}\label{linescommute} There is a minimal cost sequence of moves aligning any two linear timed traces that is chronological/reverse chronological. 
	\end{lemma}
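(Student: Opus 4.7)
The plan is to show that any sequence of moves aligning $\gamma$ to $\sigma$ can be rearranged and consolidated into a chronological sequence of the same or lower cost, which then immediately gives the analogous statement for reverse chronology by reversing the order.

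First, I would observe that every move is an affine perturbation of the timing function: the move $(s,d,i)$ adds $s$ to $\tau(i)$ alone and $d$ to $\tau(k)$ for all $k \ge i$. Hence, given any finite sequence $\rho = (s_1,d_1,i_1)\cdots(s_N,d_N,i_N)$, the net effect on $\tau(k)$ depends only on the sums
\[
\sum_{j : i_j = k} s_j \quad \text{and} \quad \sum_{j : i_j \le k} d_j,
\]
and not on the order in which the moves appear. In particular, any two moves at distinct positions commute, and moves at the same position sum componentwise. So reordering $\rho$ so that all moves land in nondecreasing order of position preserves its effect and cost.

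Next, at each position $i$, I would collapse all moves landing at $i$ into a single move $(S_i, D_i, i)$ where $S_i = \sum_{j : i_j = i} s_j$ and $D_i = \sum_{j : i_j = i} d_j$. By the triangle inequality,
\[
|S_i| + |D_i| \;\le\; \sum_{j : i_j = i} |s_j| + \sum_{j : i_j = i} |d_j|,
\]
so this step cannot increase the cost. For any position $k$ at which no move originally occurred, I insert the zero move $(0,0,k)$, which costs nothing and has no effect. The resulting sequence has exactly one mixed move at each position, ordered by position, and aligns $\gamma$ to $\sigma$ with cost no greater than the original. Applied to a minimal sequence, this yields a chronological minimal sequence, and reversing it yields a reverse chronological one with identical cost.

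The main obstacle is being completely careful about the commutation claim: one must verify that a stamp move at position $j$ followed by a delay move at position $i < j$ has exactly the same aggregated effect as performing them in the opposite order, including the interaction on $\tau(j)$ where both components contribute. Since stamp affects only a single coordinate and delay affects a downward-closed suffix, the contributions simply add, but this is the one point in the argument where all the bookkeeping for $\tau(i)$ versus $\tau(j)$ versus intermediate coordinates must be explicitly laid out. Once that verification is done, the rest of the proof is essentially the triangle inequality and a sorting step.
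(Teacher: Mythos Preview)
Your proposal is correct and follows essentially the same approach as the paper's proof: you establish commutativity of mixed moves on linear traces, reorder, then consolidate moves at each position via the triangle inequality, and finally reverse for the reverse-chronological version. The only cosmetic difference is that the paper verifies commutativity by writing out the explicit coordinatewise formula for $(s,d,i)(s',d',j)(\gamma)$, whereas you phrase it more abstractly in terms of the aggregate effect on each $\tau(k)$; both amount to the same bookkeeping.
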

	
	\subsection{Co-operation}
	
	\begin{definition}[Co-operation]
		A mixed move is said to be co-operative if its delay edit and stamp edit are in the same direction, that is, 
		$$(s, d, i) \textrm{ is co-operative iff } sd \geq 0$$
		
		A chronological (or reverse chronological) sequence of moves $m = m_0m_1\dots m_{n-1}$ aligning $\gamma$ to $\sigma$ is said to be \textit{co-operative} if each of its moves is co-operative. 
	\end{definition}
	
	This property has nothing to do with the words it is trying to align, and so, is largely unaffected by whether we're viewing the words in timing or flow representation. 
	
	\begin{example}The previous example (that was a cost 3.3 run, recall) can hence be further improved to become co-operative cost 1.7 run as follows : 
		$$\gamma \xmapsto{(0, 0, 1)} (1, 1, 2, 4, 5) \xmapsto{(1, 0, 2)} (1, 2, 2, 4, 5) \xmapsto{(0.5, 0, 3)} $$
		$$(1, 2, 2.5, 4, 5) \xmapsto{(0.2, 0, 4)} (1, 2, 2.5, 4.2, 5) \xmapsto{(0, 0, 5)} \sigma $$
		
		With a reverse chronological counterpart on the flow vector: 
		$$f_\gamma \xmapsto{(0, 0, 5)} (1, 0, 1, 2, 1) \xmapsto{(0.2, 0, 4)} (1, 0, 1, 2.2, 0.8) \xmapsto{(0.5, 0, 3)} $$$$(1, 0, 1.5, 1.7, 0.8) \xmapsto{(1, 0, 2)} (1, 1, 1.5, 1.7, 0.8) \xmapsto{(0, 0, 1)} f_\sigma $$
		
	\end{example}
	
	Once again, as the example suggests, one can always convert a non-cooperative run into an equivalent cooperative one with cheaper cost. 
	
	\begin{lemma}\label{linescoop} There is a cooperative minimal cost chronological (or reverse chronological) sequence of moves aligning any two linear timed traces. 
	\end{lemma}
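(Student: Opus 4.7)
The plan is to start from a minimal cost chronological sequence, whose existence is granted by Lemma \ref{linescommute}, and to show that any non-cooperative move in it admits a local two-site rewrite that strictly decreases the total cost while preserving chronology and the alignment equations. This will contradict minimality and force every move in an optimal chronological sequence to be cooperative; the reverse chronological variant follows at once by the order-reversing bijection between chronological and reverse chronological sequences noted just after their definitions.

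Concretely, I would write the sequence as $(s_1, d_1, 1)\cdots (s_n, d_n, n)$, subject to the alignment conditions $\sigma_i = \gamma_i + s_i + \sum_{j=1}^{i} d_j$. Suppose some position $i$ witnesses $s_i d_i < 0$; by symmetry I may assume $s_i > 0$ and $d_i < 0$, and set $\epsilon = \min(s_i, -d_i) > 0$. Assuming first that $i < n$, I would replace $(s_i, d_i)$ by $(s_i - \epsilon, d_i + \epsilon)$ and simultaneously $(s_{i+1}, d_{i+1})$ by $(s_{i+1}, d_{i+1} - \epsilon)$. The sum $s_i + d_i$ at position $i$ is preserved, and the $+\epsilon$ change in the cumulative delay $\sum_{j \leq i} d_j$ is exactly cancelled by the $-\epsilon$ change in $d_{i+1}$, so every subsequent alignment equation remains satisfied; the modification is visibly still chronological.

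For the cost, the choice of $\epsilon$ ensures $|s_i - \epsilon| = |s_i| - \epsilon$ and $|d_i + \epsilon| = |d_i| - \epsilon$, so the contribution at position $i$ drops by exactly $2\epsilon$, while the triangle inequality $|d_{i+1} - \epsilon| \leq |d_{i+1}| + \epsilon$ bounds the rise at position $i+1$ by $\epsilon$. The net saving is at least $\epsilon > 0$, contradicting the minimality of the original sequence. The edge case $i = n$ is excluded by the convention stated at the end of Section \ref{3} (the last move may be taken to be a pure delay), so $s_n = 0$ and the $n$-th move is automatically cooperative.

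The step I expect to be the main obstacle is choosing where to deposit the compensating correction. Editing only position $i$ breaks alignment everywhere after $i$, whereas spreading the fix over many later positions would pay a triangle-inequality price at each site and potentially wipe out the gain; concentrating the entire correction in $d_{i+1}$ is exactly what keeps the extra cost bounded by a single $\epsilon$ and secures a strict improvement. Once this two-site rewrite is in hand, no further case analysis on the sign pattern is required, since the argument uses only $s_i d_i < 0$ and is symmetric under flipping signs.
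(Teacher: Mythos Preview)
Your argument is correct and, at its core, performs exactly the same local two-site rewrite as the paper: the paper's Case~1/Case~2 replacement $(0,s_k+d_k,k)(s_{k+1},d_{k+1}-s_k,k+1)$ resp.\ $(s_k+d_k,0,k)(s_{k+1},d_{k+1}+d_k,k+1)$ is precisely what your substitution with $\epsilon=\min(|s_i|,|d_i|)$ produces once one unfolds the two possible values of $\epsilon$. The one genuine difference is in the packaging: the paper proves only the weak inequality $cost(m)\le cost(m')$ and then inducts on the first non-cooperative index to push it rightwards until it disappears, whereas you observe that the improvement is in fact strict (drop $2\epsilon$ at position $i$, rise at most $\epsilon$ at position $i{+}1$) and conclude directly by contradiction with minimality. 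Your route is a bit slicker and spares the inductive bookkeeping; the paper's route is slightly more constructive in that it exhibits an explicit procedure turning \emph{any} chronological run into a cooperative one of no greater cost, not just the minimal ones.
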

	
	Very similar to co-operation is the notion of \emph{cross co-operation}. Much like how mixed moves at a position should work together instead of against each other, a stamp move $s$ at a location should ideally have opposite sign to the delay $d$ at the next position, as its contribution to the next flow component is going to be $-s$. This leads us to the following definition : 
	
	\begin{definition}[Cross Co-operation]
		A reverse chronological run ${m = (0, d_n, n) \dots (s_1, d_1, 1)}$ is said to be cross co-operative if no stamp component has the same sign as the delay component of the move played at the next position, that is, $$\forall i < n : s_i \cdot d_{i+1} \leq 0$$
	\end{definition}
	
	Much like co-operation above, cross co-operation can be used to improve the cost of a run, as seen in the next lemma. 
	
	\begin{lemma}\label{crosscoop}
		For every reverse chronological co-operative run $m'$ aligning $\gamma$ to $\sigma$, there is a reverse chronological, co-operative and cross co-operative run $m$ aligning $\gamma$ to $\sigma$ such that $$cost(m) \leq cost(m')$$
	\end{lemma}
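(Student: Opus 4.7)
I plan to prove this by an exchange argument: starting from the co-operative run $m'$, I will iteratively apply a local transformation that eliminates the rightmost cross co-op violation, preserves co-operativity, and strictly reduces cost. After at most $n-1$ iterations the resulting run $m$ will be co-operative, cross co-operative, and satisfy $cost(m) \leq cost(m')$.

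Concretely, assume $m'$ has at least one cross co-op violation and let $i < n$ be the \emph{rightmost} position with $s'_i \cdot d'_{i+1} > 0$. By symmetry I may assume both are positive; co-operativity then forces $d'_i \geq 0$ and $s'_{i+1} \geq 0$. Setting $\epsilon := \min(s'_i, d'_{i+1}) > 0$, I define the modified run by
\[
  s_i \gets s'_i - \epsilon, \qquad d_i \gets d'_i + \epsilon, \qquad d_{i+1} \gets d'_{i+1} - \epsilon,
\]
every other component unchanged. Three routine verifications complete one step. By Lemma~\ref{FlowFuncTrans} a mixed move at position $j$ perturbs the flow function only at $j$ and its neighbour, so the only alignment constraints that could break are those at positions $i$ and $i+1$; substitution shows both still hold. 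Co-operativity at positions $i$ and $i+1$ is preserved because all four updated components remain non-negative. And a direct computation gives a cost change of $-\epsilon + \epsilon - \epsilon = -\epsilon < 0$.

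For the iteration, I argue that the rightmost violation strictly moves leftward. By choice of $\epsilon$ at least one of $s_i, d_{i+1}$ is now zero, erasing the violation at $i$. For every $j \geq i+1$ the cross co-op product $s_j \cdot d_{j+1}$ involves only components unchanged by the transformation (in particular $d_{i+1}$ appears in none of them), so by maximality of $i$ no violations arise at any $j \geq i+1$. The only potential new violation is at position $i-1$, since $d_i$ has increased. Hence after at most $n-1$ such steps all violations are eliminated, producing the desired $m$ with $cost(m) \leq cost(m')$. The symmetric case where $s'_i, d'_{i+1}$ are both negative is handled by the mirror transformation.

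The main obstacle is identifying the exchange itself. Several natural alternatives also preserve the alignment constraints---for instance, propagating the reduction in $d_{i+1}$ fully backward through the chain $s_i, s_{i-1}, \dots, s_1$ and compensating at $d_1$---but their cost change depends on the signs of every affected $s_j$ and of $d_1$ and can be positive in some configurations. The chosen three-variable exchange succeeds uniformly because the single $+\epsilon$ contribution at $d_i$ is offset by two $-\epsilon$ savings at $s_i$ and $d_{i+1}$, and because its strict locality leaves every position to the right of $i$ untouched, which is precisely what sustains the rightmost-violation invariant on which the termination of the iteration rests.
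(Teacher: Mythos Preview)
Your proof is correct and uses essentially the same local exchange as the paper: at a violating index $i$ you shift $\epsilon=\min(|s_i|,|d_{i+1}|)$ from $s_i$ into $d_i$ and remove $\epsilon$ from $d_{i+1}$, which is exactly the paper's two cases (modulo some typos there) written uniformly. The one place where you are actually more careful than the paper is termination: the paper picks an arbitrary violating index and asserts that ``at most $n$'' repetitions suffice, whereas you fix the \emph{rightmost} violation and verify that no component to the right of $i$ is touched, so the rightmost violation strictly moves left and $n-1$ iterations are enough.
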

	
	\subsection{Stability}
	
	The final property we define will be largely used for reverse chronological runs on flow function representations, and it is the last improvement one can make on the cost of a sequence of moves as after enforcing this property, the sequence of moves that results is unique. 
	We prefer the flow representation for our current needs because when dealing with reverse chronological models, the flow function still retains the notion of partially aligning suffixes of the traces as we go along, provides a very useful constraint. 
	Moreover, we define stability in terms of flow values rather than timestamp values because the effects of mixed moves on flow vectors can be studied locally, while timestamp series get perturbed throughout their length by individual moves, and hence trying to optimise for the effect of a single mixed move involves too many variables. 
	Hence, until further notice, consider any co-operative run to be reverse chronological, and definitions and algorithms will favour aligning flow functions rather than timing functions. 
	It is quite clear to see that completely aligning a flow function is equivalent to completely aligning its timing function, so this is acceptable. 
	
	The notion of co-operation and reverse chronology leaves only one degree of freedom so to speak in our choice of aligning moves, that is, the ratio of stamp to delay at any given position. Once the stamp components of a sequence of moves has been decided, the delay components are predetermined to align the flow function from right to left. 
	Now, at any point, a delay move only affects one flow component, and adds to the cost of one mixed move.
	A stamp move also only adds to the cost of one mixed move, but it has the ability to affect two flow components, hence it is natural to want to choose stamp moves in a manner that corrects both the flow components it belongs to to the best of its abilities, thereby extracting as much use as possible from as little cost, and leaving the rest up to the delay component to correct. 
	It is this reasoning that brings us to the property we define below as stability, which denotes this prudent choice of $s$. 
	
	\begin{definition}[Stability]
		In a reverse chronological, co-operative sequence of moves, let  ${m_i = (s, d, i)}$ be a co-operative move seeking to correct the partially aligned flow function 
		
		$f_\gamma = (f_\gamma(1), \dots f_\gamma(i), f_\gamma(i+1), f_\sigma(i+2) \dots f_\sigma(n))$ (the result of the run of $m$ all the way up to and including the last stamp move) to $f_\sigma = (f_\sigma(1), f_\sigma(2), \dots f_\sigma(n))$. 
		
		Let $e_i = f_\sigma(i) - f_\gamma(i)$, and $e_{i+1} = f_\sigma(i+1) - f_\gamma(i+1)$. 
		
		We say $m_i$ is \textit{stable} if 
		$$s = \begin{cases}
			0 & e_i\cdot e_{i+1} \geq 0 \\
			e_i & \textrm{Else If }|e_i| < |e_{i+1}| \\
			-e_{i+1} & \textrm{Otherwise}
		\end{cases}$$
	\end{definition}
	
	A co-operative sequence of moves is said to be \textit{stable} if each of its moves is stable.
	Note that any stable run is also cross co-operative. 
	
		\begin{example}
		The earlier cost 1.7 example can be even further improved to be stable and have cost 1.5, thereby giving us the following :
		$$f_\gamma \xmapsto{(0, -0.2, 5)} (1, 0, 1, 2, 0.8) \xmapsto{(0, -0.3, 4)} (1, 0, 1, 1.7, 0.8)$$$$ \xmapsto{(0, 0, 3)} (1, 0, 1, 1.7, 0.8) \xmapsto{(0.5, 0.5, 2)} (1, 1, 1.5, 1.7, 0.8) \xmapsto{(0, 0, 1)} \sigma $$
		
	\end{example}
	
	Once again, as the example suggests, a stable chronological run (or reverse chronological run) is unique , but moreover, we claim that stability improves the cost of the sequence of moves.  
	
	\begin{lemma}\label{stableunique} The stable, co-operative, reverse chronological sequence of moves aligning any two linear timed traces is unique. 
	\end{lemma}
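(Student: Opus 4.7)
My plan is to establish uniqueness by showing that stability, together with the reverse chronological convention, pins down every coordinate of every move. I will first set up the global alignment equations and then resolve the stamp components one at a time in reverse chronological order.

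By Lemma~\ref{FlowFuncTrans}, each mixed move $(s_i,d_i,i)$ perturbs only two adjacent coordinates of the flow function, and these perturbations at distinct positions commute. Summing them shows that the final flow vector is a linear function of $(s_i,d_i)_{i=1}^n$, and equating it with $f_\sigma$ yields a linear system on these $2n$ parameters. With $s_n=0$ fixed by the convention on the last move, this system has the convenient property that, once every $s_i$ is known, each $d_i$ is forced: the alignment equation at position $i$ reduces to a single linear relation expressing $d_i$ in terms of the stamps at positions $i$ and $i-1$.

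It therefore suffices to show that stability uniquely determines $s_{n-1},\dots,s_1$, which I would do by reverse-chronological induction on $i$. Assuming $s_j,d_j$ are pinned down for all $j>i$, the residual errors $e_i$ and $e_{i+1}$ appearing in the stability rule are fixed real numbers: this is where I would apply Lemma~\ref{FlowFuncTrans} cumulatively to the higher-indexed moves, using the already-solved alignment relations to track exactly how positions $i$ and $i+1$ have been shifted so far, so that the ``partially aligned flow function'' in the definition really does assume the specific values the case split inspects. The three-branch rule then returns a single value for $s_i$.

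The main subtlety I expect is verifying that the case split is unambiguous at the boundary $|e_i|=|e_{i+1}|$ with $e_i\cdot e_{i+1}<0$, where both clauses $s_i=e_i$ and $s_i=-e_{i+1}$ are in principle applicable; but in that situation $e_i=-e_{i+1}$, so the two candidate values coincide and the prescription is well-defined. Cooperation plays no active role in the uniqueness argument — it is compatible with the stable choice but does not further restrict it — so beyond this boundary check the remaining work is the bookkeeping of how prior moves have modified the entries on which the stability rule depends, which is a direct corollary of Lemma~\ref{FlowFuncTrans}.
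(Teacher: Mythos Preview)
Your proposal is correct and follows essentially the same approach as the paper: the paper's proof is the one-liner ``the choice of $s$ to make the next move stable is deterministic, and the corresponding delay moves are determined by reverse chronology and co-operation,'' and your inductive argument simply unpacks this. Your extra care about the boundary $|e_i|=|e_{i+1}|$ with $e_i e_{i+1}<0$, and your observation that co-operation is compatible with but not actually needed for uniqueness (reverse chronology alone forces each $d_i$ once the $s_i$ are known), are refinements the paper does not spell out.
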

	
	\begin{proof}
		This is clear, as given a partially aligned flow trace the choice of $s$ to make the next move stable as defined above is deterministic, and the corresponding delay moves are determined by reverse chronology and co-operation. 
	\end{proof}
	
	\begin{lemma}\label{stablebest} There is a stable minimal cost sequence of moves aligning any two linear timed traces. 
	\end{lemma}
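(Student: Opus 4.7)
The plan is to use the previous lemmas in the section to restrict to a particularly well-behaved class of aligning sequences and then transform one of them into the stable sequence by local edits that do not increase the cost. By Lemmas~\ref{linescommute}, \ref{linescoop} and \ref{crosscoop}, there exists a minimum-cost sequence $m'$ aligning $\gamma$ to $\sigma$ that is reverse chronological, co-operative, and cross co-operative. Since Lemma~\ref{stableunique} guarantees that the stable, co-operative, reverse chronological sequence $m$ aligning $\gamma$ to $\sigma$ is unique, it suffices to show that $\mathrm{cost}(m) \leq \mathrm{cost}(m')$.

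I would proceed by reverse induction on the position $i$, handling $i$ from $n-1$ down to $1$ (at position $n$ the move is, by convention, a pure delay and hence already stable). The inductive hypothesis is that, after processing positions $n, n-1, \dots, i+1$, the (partially modified) sequence has stamp components agreeing with $m$ at those positions. At position $i$, I would replace the current stamp $s'_i$ by the stable choice $s_i$ dictated by the definition and simultaneously readjust $d_i$ and $d_{i+1}$ so that alignment is preserved; this readjustment is forced locally by Lemma~\ref{FlowFuncTrans}. The cost change
\[
\Delta = (|s_i| - |s'_i|) + (|d_i^{\text{new}}| - |d_i^{\text{old}}|) + (|d_{i+1}^{\text{new}}| - |d_{i+1}^{\text{old}}|)
\]
must then be shown to be non-positive.

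The verification is a case analysis on the sign relationship of the current errors $e_i$ and $e_{i+1}$, exactly mirroring the three branches in the definition of stability. When $e_i \cdot e_{i+1} \geq 0$, the stable choice $s_i = 0$ is justified because co-operation forces any nonzero stamp to have the same sign as $d_i$ while cross co-operation forces it to have opposite sign to $d_{i+1}$; since $e_i$ and $e_{i+1}$ share a sign, no nonzero stamp can reduce $|d_i|+|d_{i+1}|$ enough to offset its own cost $|s_i|$. When the signs of $e_i$ and $e_{i+1}$ differ, the stable $s_i$ is whichever of $e_i$ or $-e_{i+1}$ has the smaller magnitude, and this choice makes exactly one of $d_i$ or $d_{i+1}$ vanish while absorbing a matching-sign portion of the other error. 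In every branch, the identity $|s|+|d|=|s+d|$ valid under co-operation, combined with the flow alignment equations relating $s$, $d$, $e_i$, and $e_{i+1}$, reduces $\Delta \leq 0$ to a short algebraic verification.

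The hard part is confirming that the local edit at position $i$ preserves the co-operation invariant at position $i$ (now that $d_i$ has been readjusted) and both the co-operation and cross co-operation invariants at position $i+1$, since $s_{i+1}$ is unchanged but $d_{i+1}$ has moved. This is precisely what dictates the three-way sign rule in the stability definition: the sign of the stable stamp is engineered so that the updated delay components still satisfy $s_{i+1} \cdot d_{i+1}^{\text{new}} \geq 0$ and $s_i \cdot d_{i+1}^{\text{new}} \leq 0$. Once these invariants are verified in each case, the induction runs to completion and produces a co-operative, cross co-operative sequence that coincides with $m$ by Lemma~\ref{stableunique}, giving $\mathrm{cost}(m) \leq \mathrm{cost}(m')$ as required.
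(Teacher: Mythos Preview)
Your approach is genuinely different from the paper's. The paper never transforms $m'$ into $m$ by local edits. Instead it argues by induction on the \emph{word length} $n$: the cases $n\le 2$ are checked directly, and for $n>2$ it looks at the first position from the right where $m$ and $m'$ disagree and reduces to a strictly shorter instance. When the disagreement is at the last position, the reduction is carried by an auxiliary monotonicity lemma (Lemma~\ref{cooplast}): if two words agree on their first $n-1$ timestamps and one has its final flow value closer to the target, then that one is $d_N$-closer. This lemma is proved separately and absorbs exactly the cross-position interaction that your local bookkeeping would have to handle by hand; your proposal never invokes anything like it.

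The concrete gap in your argument is the co-operation invariant at position $i$ \emph{after} the local edit. You correctly observe that once $s_i$ has been replaced by the stable value, $d_{i+1}^{\text{new}}$ coincides with the delay of the final stable run $m$, so co-operation and cross co-operation at position $i{+}1$ are inherited from $m$. But $d_i^{\text{new}} = E_i - s_i^{\text{stable}} + s'_{i-1}$ still depends on the \emph{unprocessed} stamp $s'_{i-1}$ from $m'$, and nothing in the stability rule constrains the sign of $s_i^{\text{stable}}\cdot d_i^{\text{new}}$. Consequently the identity $|s|+|d|=|s+d|$ is not available on the new side at position $i$, and without it the three-term expression for $\Delta$ does not reduce to something $\le 0$; the triangle inequality only gives $|s'_i|+|d'_i|\ge |s'_i+d'_i|$ on the old side, which is the wrong direction for an upper bound on the new cost. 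The paper's length-induction avoids this entirely because it never produces an intermediate, partially-stabilised sequence: every run it manipulates is either $m$, $m'$, or a restriction of one of these to a shorter word, and all of those remain co-operative throughout.
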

	
	\subsection{Computing $d_N$}
	
	\begin{figure}[!t]
		\removelatexerror
		\begin{algorithm}[H]
			\caption{$d_N$ Computation Algorithm}
			\label{MixBackwardsAlgo}
			\begin{algorithmic}
				\State \textbf{Input : }  $\sigma$, $\gamma$
				\State \textbf{Output : }  $d_N(\gamma, \sigma)$
				\State $cost \gets 0$
				\State $i \gets n$
				\While {$i > 1$}
				\State $a \gets f_\sigma(i) - f_\gamma(i)$
				\State $b \gets f_\sigma(i-1) - f_\gamma(i-1)$
				
				\If{$a\cdot b \geq 0$} 
				\Comment{$(0,a,i)$} 
			\State $\gamma \gets (0, a, i) \gamma$
			
			\ElsIf{$|a| < |b|$}
			\Comment{$(-a, 0, i-1)(0, 0, i)$} 
		
		\State $\gamma \gets (-a, 0, i-1)\gamma$
		\Else
		\Comment{$(b, 0, i-1)(0, a-b, i)$}
	\State $\gamma \gets (b, 0, i-1)(0, a-b, i) \gamma$
	\EndIf
	
	\State  $cost \gets cost + |a|$ 
	\State $i \gets i-1$
	\EndWhile
	
	\State $cost \gets cost + |\gamma_1 - \sigma_1|$  \Comment $(0, |\sigma_1- \gamma_1|, 1)$
	
\end{algorithmic}
\end{algorithm}
\end{figure}
\begin{lemma}\label {mixwordalgostable} Given two time sequences $\sigma = (\sigma_1, \sigma_2, \dots \sigma_n)$ and $\gamma = (\gamma_1, \gamma_2, \dots , \gamma_n)$ with linear underlying causal processes, the sequence of moves the above algorithm calculates ($m$) corresponds to the unique stable sequence of moves that aligns $\sigma$ to $\gamma$. 
\end{lemma}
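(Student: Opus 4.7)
The plan is to exploit Lemma~\ref{stableunique}: since the stable, co-operative, reverse chronological sequence aligning $\gamma$ to $\sigma$ is unique, it suffices to check that the algorithm's output \emph{is} such a sequence. I would therefore verify four properties of the output in turn: (i) it decomposes as exactly one mixed move per position; (ii) it is reverse chronological and aligns $\gamma$ to $\sigma$; (iii) it is co-operative; (iv) it is stable. The identification with the unique stable sequence then follows immediately.

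For (i) and (ii), I would first regroup the emitted moves by position: at iteration $k$ of the while loop, a stamp at position $k-1$ is emitted in cases 2 and 3, and a delay at position $k$ is emitted in cases 1 and 3. Together with the closing delay at position $1$, each position $j$ receives exactly one mixed move $(s_j,d_j,j)$, whose stamp component $s_j$ is decided at iteration $j+1$ and whose delay component $d_j$ is decided at iteration $j$ (or by the final line for $j=1$). By Lemma~\ref{FlowFuncTrans}, the stamp-at-$(k-1)$ and delay-at-$k$ produced in a single iteration commute with every component emitted in iterations strictly less than $k$, so the output can be reordered into strict reverse chronological form without changing its effect. Alignment is established by the invariant that after iteration $k$ the suffix $(f_\gamma(k),\ldots,f_\gamma(n))$ coincides with $(f_\sigma(k),\ldots,f_\sigma(n))$, proved by induction on decreasing $k$ via a one-line application of the flow transformation rule in each of the three cases; the final line of the algorithm closes the remaining gap at position~$1$.

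For (iii) and (iv) I would focus on a single position $j=k-1$ and match the algorithm's branch at iteration $k$ with the stability definition. Writing $a = f_\sigma(k)-f_\gamma(k)$ and $b = f_\sigma(k-1)-f_\gamma(k-1)$ for the current errors, these coincide with $e_{i+1}$ and $e_i$ (taking $i=j$) in the stability definition, because moves emitted in iterations $>k$ leave position $k-1$ untouched and modify position $k$ exactly by $s_k$. The three branches $ab\geq 0$, $|a|<|b|$, and $|a|\geq|b|$ then assign to $s_j$ the values $0$, $-a$, and $b$ respectively, which are precisely the three stability prescriptions $0$, $-e_{i+1}$, and $e_i$ (the boundary $|a|=|b|$ being harmless since $b=-a$ there). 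Co-operation of the resulting $(s_j,d_j,j)$ is then obtained by tracing the sign of the residual error that becomes the next iteration's $a_{k-1}$: in every case $d_j$ is either zero or inherits the sign of $s_j$.

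The main obstacle I anticipate is the bookkeeping of which iteration sets which component of each mixed move, together with the off-by-one between the algorithm's loop variable $k$ and the stability-definition index $i$. Once this correspondence is pinned down, every verification reduces to a short sign-and-magnitude check, one per branch of the algorithm.
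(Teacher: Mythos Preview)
Your proposal is correct and follows the same approach as the paper. The paper's own proof is the single sentence ``This is clear by the definition of stability''; your plan is precisely the detailed unpacking of that sentence---checking reverse chronology, alignment, co-operation, and the branch-by-branch match with the stability prescription, then invoking Lemma~\ref{stableunique} for uniqueness---so there is no substantive difference in strategy, only in level of detail.
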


\begin{proof}
This is clear by the definition of stability. 
\end{proof}

\begin{theorem}\label{MixAlgoCorr} Algorithm \ref{MixBackwardsAlgo} is correct, that is, its result \textsf(cost) $ = d_N(\gamma, \sigma)$
\end{theorem}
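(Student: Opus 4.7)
The plan is to combine the structural lemmas already established in the section with a direct cost accounting argument. By Lemma \ref{mixwordalgostable}, the sequence of moves constructed by Algorithm \ref{MixBackwardsAlgo} is the unique stable, co-operative, reverse chronological sequence aligning $\gamma$ to $\sigma$. By Lemmas \ref{stableunique} and \ref{stablebest}, this stable sequence attains the minimum cost over all sequences of moves aligning $\gamma$ to $\sigma$, which by definition is $d_N(\gamma, \sigma)$. So the bulk of the work is already done by these lemmas: any optimal aligning sequence has the same cost as the stable one, and the algorithm produces precisely the stable one.

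What remains is to verify that the scalar returned in the cost variable is exactly the sum $\sum_j (|s_j| + |d_j|)$ of stamp and delay magnitudes across all moves applied by the algorithm. I would establish this by maintaining the invariant that, after processing iteration $i$ of the while loop, cost equals the total magnitude of all move components committed during iterations $n, n-1, \dots, i$. The base case is vacuous, and at each inductive step the algorithm adds $|a|$ to cost while committing one or two move components whose combined magnitude evaluates to $|a|$ under the case-specific hypotheses on the signs and relative magnitudes of $a$ and $b$. After the loop exits, the final line appends $|\gamma_1 - \sigma_1|$, which accounts for the delay move at position $1$ needed to align the first position of the flow.

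The easy cases are the first two branches: a pure delay $(0,a,i)$ has cost $|a|$, and a pure stamp $(-a,0,i-1)$ also has cost $|a|$, so each branch's contribution matches the scalar increment trivially. The main technical obstacle is the third branch, in which two non-trivial components are applied in a single iteration: a stamp at position $i-1$ and a delay at position $i$. Here one must show that the sum of their magnitudes collapses to exactly $|a|$, which relies on the arithmetic identity valid under $ab < 0$ and $|a| \geq |b|$ that forces the two magnitudes to cancel cleanly against the common term $|a|$. Handling this case also requires the observation that the stamp committed at position $i-1$ during iteration $i$ is not a stand-alone move, but part of the mixed move $m_{i-1}$ whose delay component is committed in iteration $i-1$; this bookkeeping ensures that no component is double counted and that the algorithm's scalar accumulation faithfully reproduces $\sum_j (|s_j| + |d_j|)$, which by the earlier lemmas equals $d_N(\gamma,\sigma)$.
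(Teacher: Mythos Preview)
Your approach is correct and matches the paper's proof, which is just the two-line invocation of Lemma~\ref{mixwordalgostable} and Lemma~\ref{stablebest}. The cost-accounting verification you add is extra detail the paper leaves implicit; the clean way to see that each iteration's increment $|a|$ equals $|s_{i-1}|+|d_i|$ (in particular in the third branch) is to note that stable runs are cross co-operative, so $s_{i-1}$ and $d_i$ have opposite signs and hence $|a|=|d_i-s_{i-1}|=|d_i|+|s_{i-1}|$.
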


\begin{proof}
This holds as by Lemma \ref{mixwordalgostable} the algorithm calculates the unique stable sequence of aligning moves, and by Lemma \ref{stablebest} this must be the minimal cost sequence. 
\end{proof}

\section{Purely Timed Alignment for Linear Models}\label{4}

\begin{definition}[Sequential Process Models]
We define a sequential process model $N$ of length $n$ to be a sequence of intervals $\{[a_i, b_i] | a_i \in \mathbb{R}, b_i \in \mathbb{R} \cup \{\infty\}, i \leq n\}$. 

In addition, we define its language $\mathcal{L}(N)$ as follows : $$\{(t_1, \dots t_n) | \forall i \leq n : t_i - t_{i-1} \in [a_i, b_i]\}$$ Where $t_0 = 0$. 

We depict them as follows : 
$$\eye \longrightarrow \underset{[a_1, b_1]}{\square} \longrightarrow \bigcirc \longrightarrow \underset{[a_2,b_2]}{\square} \longrightarrow \dots \longrightarrow \underset{[a_n, b_n]}{\square} \longrightarrow \bigcirc$$
\end{definition}

\begin{example}\label{mixedmoveex}
Consider the below underlying sequential process model, and a new observed trace $\sigma = (3, 4, 5) \not \in \mathcal{L}(N)$. 
$$\eye \longrightarrow \underset{[0,1]}{\square} \longrightarrow \bigcirc \longrightarrow \underset{[2,2]}{\square} \longrightarrow \bigcirc \longrightarrow \underset{[1,1]}{\square} \longrightarrow \bigcirc$$
The best $d_t$ alignment for the example in the diagram below is $\gamma = (1,3,4)$ with minimum cost $d_t(\sigma, \gamma) = 4$. 

The best $d_t=\theta$ alignment for the example in the diagram below is also $\gamma = (1,3,4)$, but this time with minimum cost $d_\theta(\sigma, \gamma) = 3$, evidenced by the move sequence $(delay(-2, 1)delay(+1, 2))$. 

And lastly, the best $d_N$ alignment for the example in the diagram below is also $\gamma = (1,3,4)$, the sequence of moves being simply one stamp and one delay move at the start, $m = stamp(-1, 1)delay(-1, 1)$, and now with minimum cost $d_N(\sigma, \gamma) = 2 < \min \{d_t(\sigma, \gamma), d_\theta(\sigma, \gamma)\}$. 
\end{example}

\begin{theorem}\label{MixAlign} Given a sequential process model $N$ of a time Petri Net $N$ and a linear observed trace $\sigma$, the word $\gamma \in \mathcal{L}(N)$ such that $f_{\gamma}(i)= \underset{x \in [Eft(t_i), Lft(t_i)]} {\argmin}\hspace{0.5em} {|x - f_\sigma(i)|}$ also has the property $\gamma = \underset{x \in \mathcal{L}(N)}{\argmin} \hspace{0.5em}{d_N(x, \sigma)}$. 
\end{theorem}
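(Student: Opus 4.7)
My plan is to show that the pointwise greedy choice $\gamma^*$ defined in the statement satisfies $d_N(\gamma^*,\sigma) \le d_N(\gamma,\sigma)$ for every competitor $\gamma \in \mathcal{L}(N)$, by combining a pointwise error-dominance argument with a monotonicity property of $d_N$.

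Two preliminary observations set up the reduction. First, $\gamma^* \in \mathcal{L}(N)$ by construction, since each $f_{\gamma^*}(i)$ lies in $[Eft(t_i), Lft(t_i)]$ and these intervals entirely characterise the language of a sequential process model. Second, writing $e^*_i := f_\sigma(i) - f_{\gamma^*}(i)$ and $e_i := f_\sigma(i) - f_\gamma(i)$ for the flow errors, $f_{\gamma^*}(i)$ is by definition the projection of $f_\sigma(i)$ onto $[Eft(t_i), Lft(t_i)]$ while $f_\gamma(i)$ also lies in that interval; hence $f_{\gamma^*}(i)$ is weakly between $f_\gamma(i)$ and $f_\sigma(i)$, giving the pointwise dominance $|e^*_i| \le |e_i|$ with $e^*_i \cdot e_i \ge 0$ for every $i$.

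The theorem then reduces to a monotonicity lemma: whenever two flow-error vectors obey this pointwise dominance with matching signs, the dominated one has the smaller $d_N$. To prove the lemma I would invoke Theorem~\ref{MixAlgoCorr} to identify $d_N$ with the output of Algorithm~\ref{MixBackwardsAlgo} and proceed by induction on the trace length $n$. For the inductive step one compares the first iteration (at position $n$) of the algorithm run on the dominant and the dominated error vectors: the cost paid is $|e_n|$ versus $|e^*_n|$, which is already monotone, and one checks branch by branch that the residual error vector of length $n-1$ still satisfies the dominance relation, so the induction hypothesis closes the comparison.

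The main obstacle is precisely this case analysis: reducing $|e_n|$ and $|e_{n-1}|$ may cause the stable algorithm to pick a different branch for the dominated vector than for the dominant one, so the comparison cannot be done term by term but must be made across case boundaries. The most delicate transition is between branch~3 (which zeroes both $e_n$ and $e_{n-1}$) and branches~1 or~2 (which leave a residual at position $n-1$); one must verify that any additional residual produced in branches~1 or~2 for the dominated vector is still dominated by the corresponding residual for the dominant vector, so that the savings in the current iteration's cost compensate any increase in the residual propagated to position $n-1$.
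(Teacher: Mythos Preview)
Your reduction to a monotonicity lemma is correct and is a cleaner framing than the paper's: the observation that the projection $\gamma^*$ produces flow errors that are pointwise sign-matched and smaller in magnitude than those of any competitor $\gamma\in\mathcal L(N)$ is exactly right, and it does reduce the theorem to the statement that $d_N$, viewed as a function of the flow-error vector, is coordinate-monotone on each orthant.

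The gap is in your proposed proof of that lemma. You assert that after one pass of Algorithm~\ref{MixBackwardsAlgo} ``the residual error vector of length $n-1$ still satisfies the dominance relation,'' but this is false precisely in the case you flag as delicate. Take $e_{n-1}=-1$, $e_n=2$ (dominant vector, branch~3: residual $\tilde e_{n-1}=0$) against $e^*_{n-1}=-1$, $e^*_n=0$ (dominated vector, branch~1: residual $\tilde e^*_{n-1}=-1$). Now $|\tilde e^*_{n-1}|>|\tilde e_{n-1}|$, so the dominance invariant breaks and the induction hypothesis as you stated it cannot be applied. Your final sentence (``savings in the current iteration's cost compensate'') names the right intuition but is not an argument: a pure monotonicity IH gives you no mechanism to trade the step-$n$ savings $|e_n|-|e^*_n|$ against the residual deficit at position $n-1$. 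To close the induction you must either strengthen the hypothesis to carry a credit term (a bound of the form $D_{n-1}(e^*)\le D_{n-1}(e)+C$ when the last coordinate is off by at most $C$), or separately establish that $d_N$ is $1$-Lipschitz in each flow-error coordinate and invoke that alongside monotonicity.

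This is exactly the obstacle the paper's proof confronts, and its resolution is the ``disadvantage'' term $dis(k)=\max(0,c'_k-c_k)$: the paper proves the loaded inequality $\sum_{i\ge k}c_i+dis(k)\le\sum_{i\ge k}c'_i$ by downward induction on $k$, and the case analysis there is doing precisely the compensation bookkeeping your sketch omits. So your outline is on the right track and arguably better modularised than the paper's, but as written it is missing the one nontrivial idea that makes the induction go through.
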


This of course means that we have a linear time algorithm for aligning sequential processes under $d_N$, by locally choosing the best flow vector as shown in Algorithm \ref{MixAlignAlgo}. This, incidentally, is exactly the aligning word obtained by the algorithm developed for delay-only distance, $d_\theta$ in \cite{CR22}.  

\begin{figure}[!t]
\removelatexerror
\begin{algorithm}[H]
\caption{$d_N$ Alignment Algorithm}
\label{MixAlignAlgo}
\begin{algorithmic}
	\State \textbf{Input : }  $\sigma$, $N$
	\State \textbf{Output : }  $\gamma = \argmin_{x \in \mathcal{L}(N)} d_N(x, \sigma)$
	\For{$i <\in \{1, \dots, n\}$}
	\State $(a,b) \gets (Eft(t_i), Lft(t_i))$
	\State $f_{\gamma}(i)= \argmin_{x \in [a, b]} {|x - f_\sigma(i)|} $
	\State $i + + $
	\EndFor
\end{algorithmic}
\end{algorithm}
\end{figure}
\section{Implementation}

We have implemented the $d_N$ computation algorithm and the $d_N$ alignment algorithm for sequential process models in python, available at \url{https://github.com/NehaRino/TimedAlignments}. Both algorithms have linear time complexity, and so, they have efficient running times, as evidenced by the table below :\\
\centerline{
\begin{tabular}{|c| c|}
	\hline
	Trace Length & Running Time (seconds) \\ 
	\hline
	10 & 0.00003\\
	100 & 0.00024\\
	1000 &0.00259\\
	10000& 0.02811\\
	100000& 0.33131\\
	1000000& 3.44314\\
	\hline
\end{tabular}
}

\section{Perspectives and Conclusion}
In this paper, we studied the alignment problem for timed processes using the mixed moves distance $d_N$, and solved both distance computation and the purely timed alignment problem for the same. As far as we know, this (along with \cite{CR22}) is the first step in conformance checking for time-aware process mining, and much further work can be inspired from this point. Firstly, we have only solved the alignment problem for $d_N$ over sequential process models, which are rather structurally restricted and it would be interesting to see how to broaden the scope of these methods to larger classes of process models, such as branching process models or even general time Petri nets. Secondly, further investigation in the general timed alignment problem is necessary, as our proposed approach here is rather rudimentary and can certainly be improved. Lastly, there are a number of other conformance artefacts that can be set and studied in the timed setting, such as anti-alignments \cite{CBC21}, and it would be very interesting to better develop all such conformance checking methods in a manner that accounts for timed process models.

\bibliographystyle{IEEEtran}
\bibliography{Conformanceieee}

\begin{thebibliography}{10}
\providecommand{\url}[1]{#1}
\csname url@samestyle\endcsname
\providecommand{\newblock}{\relax}
\providecommand{\bibinfo}[2]{#2}
\providecommand{\BIBentrySTDinterwordspacing}{\spaceskip=0pt\relax}
\providecommand{\BIBentryALTinterwordstretchfactor}{4}
\providecommand{\BIBentryALTinterwordspacing}{\spaceskip=\fontdimen2\font plus
\BIBentryALTinterwordstretchfactor\fontdimen3\font minus
  \fontdimen4\font\relax}
\providecommand{\BIBforeignlanguage}[2]{{%
\expandafter\ifx\csname l@#1\endcsname\relax
\typeout{** WARNING: IEEEtran.bst: No hyphenation pattern has been}%
\typeout{** loaded for the language `#1'. Using the pattern for}%
\typeout{** the default language instead.}%
\else
\language=\csname l@#1\endcsname
\fi
#2}}
\providecommand{\BIBdecl}{\relax}
\BIBdecl

\bibitem{A16}
W.~M.~P. van~der Aalst, \emph{Process Mining - Data Science in Action, Second
  Edition}.\hskip 1em plus 0.5em minus 0.4em\relax Springer, 2016.

\bibitem{CDSW18}
\BIBentryALTinterwordspacing
J.~Carmona, B.~F. van Dongen, A.~Solti, and M.~Weidlich, \emph{Conformance
  Checking - Relating Processes and Models}.\hskip 1em plus 0.5em minus
  0.4em\relax Springer, 2018. [Online]. Available:
  \url{https://doi.org/10.1007/978-3-319-99414-7}
\BIBentrySTDinterwordspacing

\bibitem{Adriansyah2014AligningOA}
A.~Adriansyah, ``Aligning observed and modeled behavior,'' Ph.D. dissertation,
  Technische Universiteit Eindhoven, 2014.

\bibitem{BCC21}
\BIBentryALTinterwordspacing
M.~Boltenhagen, T.~Chatain, and J.~Carmona, ``A discounted cost function for
  fast alignments of business processes,'' in \emph{{BPM} 2021, Proceedings},
  ser. LNCS, vol. 12875.\hskip 1em plus 0.5em minus 0.4em\relax Springer, 2021,
  pp. 252--269. [Online]. Available:
  \url{https://doi.org/10.1007/978-3-030-85469-0\_17}
\BIBentrySTDinterwordspacing

\bibitem{Cheikhrouhou2014TheTP}
S.~Cheikhrouhou, S.~Kallel, N.~Guermouche, and M.~Jmaiel, ``The temporal
  perspective in business process modeling: a survey and research challenges,''
  \emph{Service Oriented Computing and Applications}, vol.~9, pp. 75--85, 2014.

\bibitem{Eder1999TimeCI}
J.~Eder, E.~Panagos, and M.~Rabinovich, ``Time constraints in workflow
  systems,'' in \emph{CAiSE}, 1999.

\bibitem{inbook}
A.~Nguyen, S.~Chatterjee, S.~Weinzierl, L.~Schwinn, M.~Matzner, and
  B.~Eskofier, \emph{Time Matters: Time-Aware LSTMs for Predictive Business
  Process Monitoring}, 03 2021, pp. 112--123.

\bibitem{RMAW13}
\BIBentryALTinterwordspacing
A.~Rogge{-}Solti, R.~Mans, W.~M.~P. van~der Aalst, and M.~Weske, ``Repairing
  event logs using timed process models,'' in \emph{On the Move to Meaningful
  Internet Systems: {OTM} 2013, Proceedings}, ser. LNCS, vol. 8186.\hskip 1em
  plus 0.5em minus 0.4em\relax Springer, 2013, pp. 705--708. [Online].
  Available: \url{https://doi.org/10.1007/978-3-642-41033-8\_89}
\BIBentrySTDinterwordspacing

\bibitem{CRHA20}
\BIBentryALTinterwordspacing
R.~Conforti, M.~L. Rosa, A.~H.~M. ter Hofstede, and A.~Augusto, ``Automatic
  repair of same-timestamp errors in business process event logs,'' in
  \emph{{BPM} 2020, Proceedings}, ser. LNCS, vol. 12168.\hskip 1em plus 0.5em
  minus 0.4em\relax Springer, 2020, pp. 327--345. [Online]. Available:
  \url{https://doi.org/10.1007/978-3-030-58666-9\_19}
\BIBentrySTDinterwordspacing

\bibitem{AS21}
\BIBentryALTinterwordspacing
W.~M.~P. van~der Aalst and L.~F.~R. Santos, ``May {I} take your order? - on the
  interplay between time and order in process mining,'' in \emph{Business
  Process Management Workshops - {BPM} 2021 International Workshops}, ser.
  Lecture Notes in Business Information Processing, A.~Marrella and B.~Weber,
  Eds., vol. 436.\hskip 1em plus 0.5em minus 0.4em\relax Springer, 2021, pp.
  99--110. [Online]. Available:
  \url{https://doi.org/10.1007/978-3-030-94343-1\_8}
\BIBentrySTDinterwordspacing

\bibitem{SSA11}
W.~Aalst, H.~Schonenberg, and M.~Song, ``Time prediction based on process
  mining,'' \emph{Inf. Syst.}, vol.~36, pp. 450--475, 04 2011.

\bibitem{CR22}
T.~Chatain and N.~Rino, ``{Timed Alignments},'' in \emph{4th International
  Conference on Process Mining, {ICPM} 2022}.\hskip 1em plus 0.5em minus
  0.4em\relax {IEEE}, 2022.

\bibitem{CBC21}
\BIBentryALTinterwordspacing
T.~Chatain, M.~Boltenhagen, and J.~Carmona, ``Anti-alignments - measuring the
  precision of process models and event logs,'' \emph{Inf. Syst.}, vol.~98, p.
  101708, 2021. [Online]. Available:
  \url{https://doi.org/10.1016/j.is.2020.101708}
\BIBentrySTDinterwordspacing

\end{thebibliography}
\clearpage
\appendix
We will detail the proofs of the theorems and lemmas stated above. 

\subsection{Chronology}
We start with the proof of Lemma \ref{linescommute}. 

\begin{lemma*}There is a minimal cost sequence of moves aligning any two linear timed traces that is chronological/reverse chronological. 
\end{lemma*}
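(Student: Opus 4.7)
The plan is to show that any minimum-cost aligning sequence can be transformed into a chronological one by a finite sequence of local rearrangements and merges, none of which increases the total cost. Start with an arbitrary minimal-cost sequence $\rho$ of elementary stamp and delay moves aligning $\gamma$ to $\sigma$.

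The first step is to establish that moves commute. A stamp move at position $i$ affects only $\tau(i)$, while a delay move at position $j$ affects $\tau(k)$ for all $k \geq j$; a direct coordinate-by-coordinate check shows that for any two elementary moves — stamp-stamp, delay-delay, or stamp-delay — applying them in either order yields the same timing function, whether or not the target positions coincide. Consequently $\rho$ may be reordered arbitrarily without altering the end result $\sigma$ or the total cost.

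The second step is to collapse the moves at each position. Using commutation, group all the moves of $\rho$ acting on position $i$ contiguously. Within such a group, all stamp contributions sum additively to a single stamp move $(s_i, 0, i)$ and all delay contributions sum additively to a single delay move $(0, d_i, i)$, where $s_i$ and $d_i$ are the signed sums of the original stamp and delay values at position $i$. By the triangle inequality $|s_i| \leq \sum |s|$ and $|d_i| \leq \sum |d|$, so this collapse does not increase cost. The two surviving moves at position $i$ combine into the mixed move $(s_i, d_i, i)$ of cost $|s_i| + |d_i|$. For every position $i$ on which $\rho$ performed no move, insert the zero mixed move $(0, 0, i)$, which contributes nothing to the cost.

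Finally, use commutation once more to sort the resulting mixed moves in increasing order of position $i$; the outcome is a chronological sequence of the prescribed form $(s_1, d_1, 1)(s_2, d_2, 2) \dots (s_n, d_n, n)$ whose cost is at most $\mathrm{cost}(\rho)$, hence minimum. Reversing this ordering gives the reverse chronological version, with the same cost. The main obstacle is purely bookkeeping: carefully verifying the commutation of stamp and delay moves in all relative positions of $i$ and $j$, and checking that the end-state constraints $\sigma_i = \gamma_i + s_i + \sum_{j \leq i} d_j$ claimed in the definition of chronology are automatically satisfied by the collapsed $(s_i, d_i)$ values — which follows directly from unfolding how the elementary moves of $\rho$ acted on each coordinate of $\tau$.
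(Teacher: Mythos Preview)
Your proof is correct and follows essentially the same approach as the paper: establish commutativity of moves on linear traces, reorder by position, and collapse multiple moves at the same position via the triangle inequality, inserting zero moves where needed. The only cosmetic difference is the order of operations (you collapse then sort, the paper sorts then collapses), which is immaterial given commutativity.
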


\begin{proof}
We assume that there is at least one mixed move at each position, adding zero moves as needed. We first observe the following: 
$$\forall  i < j : (s, d, i)(s', d', j)(\gamma) = \gamma' = (s', d', j)(s, d, i)(\gamma)$$
Where $$\forall 1 \leq k \leq n : \gamma'_k = \begin{cases}
\gamma_k & k <i \\
\gamma_k + s + d & k = i \\
\gamma_k + d & i < k < j \\
\gamma_k + s' + d + d' & k = j \\
\gamma_k + d + d' & k > j
\end{cases}$$
That is, on linear timed traces, mixed moves are commutative. 
Clearly permuting the moves of a run does not change its cost, so we can assume any sequence of moves to be in increasing order of positions without loss of generality.

Now we can write $m'$ as follows : 
$$m' = (s_{11}, d_{11}, 1)\dots (s_{ij}, d_{ij}, i)\dots(s_{nk_n}, d_{nk_n}, n)$$ 

Where for each $i$, $k_i \geq 1$ is the number of mixed moves at position $i$. 
$$cost(m') = \sum_{i = i}^{n} (\sum_{j = 1}^{k_i} |s_{ij}| + |d_{ij}|)$$ 

Define $s_i = \sum_{j = 1}^{k_i} s_{ij}$, $d_i =\sum_{j = 1}^{k_i} d_{ij}$, giving us the chronological run $$m = (s_1, d_1, 1)\dots (s_i, d_i, i)\dots (s_n, d_n, n)$$

Now, by the triangle inequality we know that : 
$$\forall i : \sum_{j = 1}^{k_i} |s_{ij}| + |d_{ij}| \geq | \sum_{j = 1}^{k_i} s_{ij} | + | \sum_{j = 1}^{k_i} d_{ij} | = cost(s_i, d_i, i) $$

Thereby giving us that $$cost(m) \leq cost(m')$$ 

As for the reverse chronological sequence, by commutativity we can simply reverse $m$ and obtain $m_R$, a reverse chronological sequence of moves that does the same transformation and as reversing the sequence preserves cost, $$cost(m_R) = cost(m) \leq cost(m')$$ 
\end{proof}

\subsection{Co-operation} 
Now we prove Lemma \ref{linescoop}
\begin{lemma*}There is a cooperative minimal cost chronological (or reverse chronological) sequence of moves aligning any two linear timed traces. 
\end{lemma*}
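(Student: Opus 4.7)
The plan is to start from a minimum-cost chronological sequence $m = (s_1, d_1, 1)\cdots(s_n, d_n, n)$, which exists by Lemma \ref{linescommute}, and rewrite it from left to right so that each move becomes cooperative without increasing total cost.

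The key observation is that substituting $(s_i, d_i)$ by $(s_i + \delta, d_i - \delta)$ and simultaneously $d_{i+1}$ by $d_{i+1} + \delta$, for any $\delta \in \mathbb{R}$, leaves every final timestamp $\sigma_k = \gamma_k + s_k + \sum_{j \le k} d_j$ unchanged: at position $i$ the $+\delta$ in $s_i$ cancels the $-\delta$ in $d_i$, and for $k > i$ the $-\delta$ contribution of the delay at $i$ is neutralised by the $+\delta$ added to the delay at $i+1$. So any such rewrite preserves the aligning property.

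If $(s_i, d_i, i)$ is not cooperative, say $s_i > 0$ and $d_i < 0$ (the other three sign patterns are handled symmetrically, or by negation), I choose $\delta$ so as to zero out the smaller of $|s_i|$ and $|d_i|$: take $\delta = -d_i$ when $s_i + d_i \ge 0$, producing the cooperative move $(s_i + d_i, 0, i)$; otherwise take $\delta = s_i$, producing the cooperative move $(0, s_i + d_i, i)$. A routine split into these two subcases shows that the cost at position $i$ drops by exactly $2|\delta| = 2\min(|s_i|,|d_i|)$, while the cost at position $i+1$ grows by at most $|\delta|$ by the triangle inequality applied to $|d_{i+1} + \delta|$. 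The total change is therefore at most $-|\delta| \le 0$, so the rewrite is cost-non-increasing (and is strictly improving whenever the move was genuinely non-cooperative).

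I then iterate from $i = 1$ to $i = n-1$; each step only touches $d_{i+1}$ of the next move and never the already-processed components at positions $\le i$, so cooperativity acquired earlier is preserved along the way, and the final move at position $n$ is cooperative by the pure-delay convention adopted in Section \ref{3}. This yields a cooperative chronological sequence of cost $\le \mathrm{cost}(m)$, which must then be minimum. The reverse chronological case follows from the bijection $m \mapsto m^R$ that reverses the sequence and preserves both cost and cooperativity. The only place demanding care is the sign-case bookkeeping when verifying the cost inequality; everything else is an immediate consequence of the cancellation identity above and the triangle inequality.
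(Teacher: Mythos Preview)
Your argument is correct and is essentially the same as the paper's: both proofs fix a non-cooperative move $(s_i,d_i,i)$ by collapsing it to either $(s_i+d_i,0,i)$ or $(0,s_i+d_i,i)$ and absorbing the compensating shift into $d_{i+1}$, then bound the net cost change via the triangle inequality; your parametrisation by a single $\delta$ just packages the paper's two cases more uniformly. One clerical slip: with your substitution $(s_i,d_i)\mapsto(s_i+\delta,\,d_i-\delta)$, zeroing out $d_i$ requires $\delta=d_i$ (not $-d_i$) and zeroing out $s_i$ requires $\delta=-s_i$ (not $s_i$); the stated cost drop $2|\delta|=2\min(|s_i|,|d_i|)$ and the rest of the argument are unaffected once the signs are corrected.
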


\begin{proof}
We first prove that there is a co-operative and chronological minimal cost sequence of moves aligning any two linear timed traces. 
Given any minimal cost sequence of moves aligning $\gamma$ to $\sigma$, by Lemma \ref{linescommute} we know that there is a chronological sequence that is also minimal cost, call it $m'$. 

Assume $|\gamma| = |\sigma| = n$. 

We proceed by induction on the first index $k$ at which $m' = (s_1, d_1, 1)\dots(s_n, d_n, n)$ uses a non co-operative move. 

\textbf{Base case} : $k = n$.

This gives us $$s_n d_n < 0 \implies |s_n + d_n| < |s_n| + |d_n|$$ 
$$\implies m = m'|_{n-1}(0, s_n + d_n, n)$$ costs less than $m'$, aligns the words and is co-operative. 

\textbf{Induction hypothesis} : Suppose we've demonstrated that there is a sequence of moves $m''$ that aligns $\gamma$ to $\sigma$, is co-operative for the first $k-1 < n-1$ steps, is non co-operative at step $k$ and $cost(m'') \leq cost(m)$

\textbf{Claim} : There is a sequence of moves $m$ that aligns $\gamma$ to $\sigma$, is co-operative for at least the first $k$ steps and $$cost(m) \leq cost(m'') \leq cost(m')$$

\textbf{Proof} : Let $m'' = (s_1, d_1, 1)\dots(s_n, d_n, n)$. 

We first define $m_i = m''|_{k-1}$, $m_f = (s_{k_2}, d_{k+2}, k+2)\dots (s_n, d_n, n)$, thereby giving us $$m'' = m_i(s_k, d_k, k)(s_{k+1}, d_{k+1}, k+1)m_f$$

We know that $s_kd_k < 0$. 
This can be split into the following cases : 

\textbf{Case 1} : $|d_k + s_k| = |d_k| - |s_k|$

We define the following aligning, chronological run that is co-operative up to (and including) the $k$th step : 
$$m = m_i( 0, d_k + s_k, k)( s_{k+1}, d_{k+1} - s_k, k+1)m_f$$
$$cost(m'') - cost(m)  $$$$ = |s_{k+1}| + |d_{k+1}| + |s_k| + |d_k| - ( | d_k + s_k| + | s_{k+1}| + | d_{k+1} - s_k |) $$ $$\geq (|s_{k+1}| + |d_{k+1}|  - |s_{k+1}| + | d_{k+1}|  )+ (  |s_k| + |d_k|  -| d_k + s_k|  -|s_k |)   \geq 0$$

\textbf{Case 2} : $|d_k + s_k| = |s_k| - |d_k|$

We define the following aligning, chronological run that is co-operative up to (and including) the $k$th step : 
$$m = m_i( d_k + s_k, 0, k)( s_{k+1}, d_k + d_{k+1}, k+1)m_f$$
$$cost(m'') - cost(m) $$$$= |s_{k+1}| + |d_{k+1}| + |s_k| + |d_k| - (|s_k + d_k | + | s_{k+1}| + | d_{k+1} + d_k |) $$
$$\geq |d_{k+1}| + |s_k| + |d_k| - (|s_k + d_k | + | d_{k+1}| + |d_k |) \geq 0$$

By induction we can hence conclude that there is a chronological sequence of moves $m$ that is co-operative throughout and has lower cost than $m'$, i.e., if $m'$ is minimal cost so is $m$.  

Now, $m_R$ is also co-operative throughout as co-operation is a property of the individual moves constituting the sequence, and $m_R$ has exactly the same moves just in the reverse order, so there is also a reverse chronological minimal cost sequence of moves aligning $\gamma$ to $\sigma$. 

\end{proof}

We proceed to the proof of Lemma \ref{crosscoop}. 

\begin{lemma*}	For every reverse chronological co-operative run $m'$ aligning $\gamma$ to $\sigma$, there is a reverse chronological, co-operative and cross co-operative run $m$ aligning $\gamma$ to $\sigma$ such that $$cost(m) \leq cost(m')$$
\end{lemma*}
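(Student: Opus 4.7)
The plan is to mirror the proof of Lemma~\ref{linescoop}: identify a local surgery on $m'$ that eliminates a cross co-operation violation while preserving reverse chronology, co-operation, and the alignment, and that strictly decreases the cost. Suppose $m' = (0,d_n,n)(s_{n-1},d_{n-1},n-1)\cdots(s_1,d_1,1)$ violates cross co-operation at some position $i < n$, so $s_i \cdot d_{i+1} > 0$; without loss of generality take $s_i > 0$ and $d_{i+1} > 0$ (the doubly negative case is symmetric). Define $\epsilon := \min(s_i, d_{i+1}) > 0$ and form $m''$ from $m'$ by replacing $(s_i, d_i, i)$ with $(s_i - \epsilon,\, d_i + \epsilon,\, i)$ and $(s_{i+1}, d_{i+1}, i+1)$ with $(s_{i+1},\, d_{i+1} - \epsilon,\, i+1)$, leaving all other moves unchanged.

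I would then check three properties of $m''$ using Lemma~\ref{FlowFuncTrans}. First, alignment: transferring $\epsilon$ between $s_i$ and $d_i$ leaves the flow at position $i$ invariant, and the $-\epsilon$ contributions coming from the reduced $s_i$ and the reduced $d_{i+1}$ cancel inside the flow at position $i+1$; every other flow component is untouched. Second, co-operation: the original co-operation together with $s_i, d_{i+1} > 0$ forces $d_i \geq 0$ and $s_{i+1} \geq 0$, while $\epsilon \leq \min(s_i, d_{i+1})$ keeps $s_i - \epsilon \geq 0$ and $d_{i+1} - \epsilon \geq 0$, so co-op is preserved at both modified positions. Third, cost: at position $i$, $|s_i - \epsilon| + |d_i + \epsilon| = s_i + d_i$ is unchanged; at position $i+1$, $|s_{i+1}| + |d_{i+1} - \epsilon|$ is strictly smaller by $\epsilon > 0$.

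From this surgery I would conclude the lemma by a minimality argument rather than a direct constructive induction. Let $S$ be the set of reverse chronological, co-operative runs aligning $\gamma$ to $\sigma$ whose total cost is at most $cost(m')$. The set $S$ is non-empty (it contains $m'$), cut out by finitely many closed conditions, and bounded by the cost inequality, hence compact under the natural parametrisation by $(s_1, d_1, \ldots, s_{n-1}, d_{n-1}, d_n)$. Cost is continuous, so $S$ contains a minimiser $m$. If $m$ violated cross co-operation at any position, the surgery above would produce an element of $S$ with strictly smaller cost, contradicting minimality; hence $m$ is cross co-operative, and $cost(m) \leq cost(m')$ by construction.

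The obstacle I expect to wrestle with most is the cascade phenomenon. A direct inductive cleanup along the lines of Lemma~\ref{linescoop} is tempting, but the surgery at $i$ promotes $d_i$ from a potentially zero value to a strictly positive one, and this can turn a previously satisfied constraint $s_{i-1} \cdot d_i \leq 0$ into a fresh violation at $i-1$. Iterating the fix then propagates leftward and can interact with other pre-existing violations in a delicate order-dependent way. The compactness detour sidesteps this entirely; a fully constructive proof would instead require either a potential function that provably decreases through every cascade step or a careful lexicographic scheduling of the fixes so that no earlier repair is ever undone.
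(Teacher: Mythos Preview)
Your argument is correct, and the local surgery you perform is exactly the one the paper uses (the paper's two cases are your $\epsilon = s_i$ versus $\epsilon = d_{i+1}$, written with a couple of index typos). The genuine difference is in how you finish. The paper simply asserts that iterating the surgery ``at most $n$ times'' yields a cross co-operative run; you instead pass to a compact parameter set and pick a cost minimiser, so that any surviving violation would contradict minimality. Your route is cleaner and fully self-contained, and it neatly sidesteps the cascade worry you flag. That said, the cascade is in fact harmless for the direct iterative approach too: your surgery at position $i$ zeroes one of the two factors $s_i$, $d_{i+1}$, and no surgery at any other position ever touches either of those two variables again (surgery at $j$ only modifies $s_j$, $d_j$, $d_{j+1}$), so each position is repaired at most once and $n-1$ iterations suffice regardless of order. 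The paper leans on this without saying so; your compactness argument buys you the conclusion without needing to notice it.
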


\begin{proof}
Say $m' = (0, d'_n, n) \dots (s'_1, d'_1, 1)$. 
Let $i$ be an index at which the stamp move is not cross co-operative, i.e., $s_i \cdot d_{i+1} > 0$. 

Consider a run $m''$ that is identical to $m'$, except for the moves it plays at positions $i$ and $i+1$.   

Let $m''$ perform the move $(0, d_i + s_i, i)$ at position $i$ and $(s_{i+1}, d_{i+1} - s_{i}, i+1)$ for the move at position $i+1$ if $|s_i| < |d_i|$ and $(s_i - d_{i-1}, d_i + d_{i+1}, i)$ at position $i$ and $(s_{i+1}, 0, i+1)$ for the move at position $i+1$ otherwise. 

Clearly, the appropriate one of either of these is cross co-operative at position $i$, while also staying co-operative and reverse chronologically aligning $\gamma$ to $\sigma$, and costs $\min(|s_i|, |d_{i-1}|)$ less that $m'$. 

Doing this procedure for at most $n$ times will result in a run with every stamp move being cross co-operative and with lower cost than $m'$, that is, our required $m$. 
\end{proof}

\subsection{Stability} 
Before we proceed with the proof of this result, we prove a small technical lemma that will help us in the proof of Lemma \ref{stablebest}. The motivation behind the following result is essentially that the mixed distance $d_N$ respect an intuitive notion of closeness, once two words have been aligned up till the penultimate place, the one whose last duration since firing lands it closer to the desired target will naturally be the easier one to align. Put another way, if two words, aligning to a target word, are identical to each other in all respects but the last timestamp, then the one whose last flow value is closer to the desired target flow value, is the word that is over closer to the target under $d_N$ between the two. 

\begin{lemma}\label{cooplast}
Given two timing functions $$\gamma^x = (\gamma_1, \gamma_2, \dots \gamma_{n-1}, \gamma_{n-1} + x)$$ $$\gamma^y = (\gamma_1, \gamma_2, \dots \gamma_{n-1}, \gamma_{n-1} + y)$$ Both aligning to $\sigma = (\sigma_1, \dots, \sigma_n)$, such that $|x - (\sigma_n - \sigma_{n-1})| \leq |y - (\sigma_n - \sigma_{n-1})|$ then $$d_N(\sigma, \gamma^x) \leq d_N(\sigma, \gamma^y)$$ 
\end{lemma}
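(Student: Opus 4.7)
The plan is to prove this by induction on the trace length $n$, leveraging the backward stability-based algorithm (Algorithm~\ref{MixBackwardsAlgo}) whose correctness was established in Theorem~\ref{MixAlgoCorr}. Write $f^{\ast} := \sigma_n - \sigma_{n-1}$ and observe that at the first iteration of the algorithm applied to either trace, the relevant quantities are $a := f_\sigma(n) - f_\gamma(n)$ and $b := f_\sigma(n-1) - f_\gamma(n-1)$. Since $\gamma^x$ and $\gamma^y$ coincide on their first $n-1$ entries, $b$ is identical for both traces, while the values $a^x = f^{\ast} - x$ and $a^y = f^{\ast} - y$ satisfy $|a^x| \leq |a^y|$ by hypothesis.

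For the base case $n = 2$, one can unroll the three algorithmic cases to write $d_N$ in closed form: it equals $|a| + |b|$ when $ab \geq 0$ and $\max(|a|, |b|)$ when $ab < 0$. Monotonicity in $|a|$ with $b$ fixed then follows from a direct sign analysis on $a^x$ and $a^y$ relative to $b$. For the inductive step, the first iteration of the algorithm contributes $|a|$ to the total cost and leaves a length-$(n-1)$ residual problem whose two traces again coincide except at their new last position, where the residual flow error is $b$ in Case~1, $b + a$ in Case~2, and $0$ in Case~3. Applying the inductive hypothesis to the residual then decomposes $d_N(\sigma, \gamma^x) = |a^x| + R^x$ and $d_N(\sigma, \gamma^y) = |a^y| + R^y$, which should give a direct comparison.

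The main obstacle I anticipate is that this induction does not close in one pass: the more aggressive move the algorithm makes on $\gamma^y$ can leave a smaller residual error than on $\gamma^x$, yielding $R^x \geq R^y$, while $|a^x| \leq |a^y|$ pulls in the opposite direction. Closing the induction therefore appears to require strengthening the statement---most naturally a Lipschitz-type bound of the form $d_N(\sigma, \gamma^y) - d_N(\sigma, \gamma^x) \geq |a^y| - |a^x|$---so that the per-step savings on $|a|$ and the accumulated residual loss can be made to compensate. The hard part is therefore not any single case analysis but identifying the right invariant to carry through the induction.

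As a dual approach I would keep in reserve, I would start from an optimal reverse chronological, cooperative, stable alignment $m^y$ of $\gamma^y$ to $\sigma$ (which exists by Lemmas~\ref{linescoop} and~\ref{stableunique}), retain the moves $m^y$ plays at positions $1, \ldots, n-2$, and re-optimize only the pair $(s_{n-1}, d_n)$ subject to the two alignment constraints $s_{n-1}^x + d_{n-1}^x = s_{n-1}^y + d_{n-1}^y$ and $d_n^x - s_{n-1}^x = a^x$. This reduces the claim to showing that the minimum of a three-term sum-of-absolute-values objective is monotone in the parameter $a$ of one linear constraint; the latter is a convex, piecewise-linear one-dimensional optimization that can be controlled by inspecting the medians of the relevant breakpoints.
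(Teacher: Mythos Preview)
Your reserve ``dual'' approach is essentially the paper's proof, and it is the intended route rather than your primary one. The paper works \emph{forward} (chronological, cooperative) rather than in reverse: it fixes an arbitrary chronological cooperative $m^y$, reuses the first $n-2$ moves verbatim on $\gamma^x$, notes that cooperation forces the cost at position $n-1$ to be $|s_{n-1}+d_{n-1}|$ (pinned by the alignment constraint at that position, hence equal for both runs), and reduces the whole comparison to $|d_n|$ versus $|d'_n|$. There is no induction, no Lipschitz invariant, and no three-term piecewise-linear minimisation; your primary plan is doing far more work than this two-line reduction.

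That said, your inductive difficulty is detecting a real gap that the short argument glosses over. After $m^y|_{n-1}$ the $n$-th coordinate sits at $\sigma_{n-1}-s_{n-1}+y$, not $\sigma_{n-1}+y$, so in fact $d_n=f^{\ast}-y+s_{n-1}$ and $d'_n=f^{\ast}-x+s'_{n-1}$; the bare hypothesis $|x-f^{\ast}|\le|y-f^{\ast}|$ does not control these once the stamp term enters. Concretely, take $n=2$, $\sigma=(4,6)$, $\gamma^x=(1,2.5)$, $\gamma^y=(1,4)$: then $|x-f^{\ast}|=0.5<1=|y-f^{\ast}|$ but $d_N(\sigma,\gamma^x)=3.5>3=d_N(\sigma,\gamma^y)$, which simultaneously refutes the lemma as literally stated and your proposed Lipschitz strengthening. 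The repair is not a cleverer invariant but an extra hypothesis: at the sole call site (Case~2 in the proof of Lemma~\ref{stablebest}) both stamp choices satisfy the cross-cooperativity sign constraint against $e_n$, which rules out the sign flip between $a^x$ and $a^y$ that drives the counterexample. You should fold that side condition into the statement before committing to either proof strategy; once it is in place, the direct construction goes through without the convex optimisation detour.
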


\begin{proof}
Consider any run $m^y$ aligning $\gamma^y$ to $\sigma$. 
We claim there is a run $m^x$ aligning $\gamma^x$ to $\sigma$ such that $cost(m^x) \leq cost(m^y)$. 

This proves the above lemma as then the minimal cost run aligning $\gamma^x$ to $\sigma$ would have cost at most $m^x$ corresponding to the minimal cost run aligning $\gamma^y$ to $\sigma$, so $$d_N(\sigma, \gamma^x) \leq d_N(\sigma, \gamma^y)$$

Now, we can assume without loss of generality by Lemmas \ref{linescommute} and \ref{linescoop}  that $m^y$ is chronological and co-operative, say $m^y = (s_1, d_1, 1) \dots (0, d_n, n)$. 

We construct $m^x = (s_1, d_1, 1)(s_2, d_2, 2)\dots (s'_{n-1}, d'_{n-1}, n-1)(0, d'_n, n)$ differing at most at the last two mixed moves. 
As $m^x|_{n-2} = m^y|_{n-2}$ we can also conclude that $s_{n-1} + d_{n-1} = s'_{n-1} + d'_{n-1}$ by chronology, so with co-operation we can conclude that $cost(m^y) - cost(m^x) = |d_n| - |d'_n|$. 

Hence, the problem is reduced to proving that whatever the choice of $d_n$, there is a choice of $d'_n$ that fully aligns $\gamma^x$ and has atmost equal magnitude. 

Now, $|d_n| = |\sigma_{n-1} + y - \sigma_n|$ as $m^y|_{n-1}$ fully corrected up till the penultimate place, and similarly $|d'_n| = |\sigma_{n-1} + x - \sigma_n|$, hence, by definitions of $x$ and $y$, we have the result. 

\end{proof}

Now, we can proceed with the proof of Lemma \ref{stablebest}. 
\begin{lemma*}
	There is a stable, co-operative, reverse chronological minimal cost sequence of moves aligning any two linear timed traces. 
\end{lemma*}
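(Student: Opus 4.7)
The plan is to start from a reverse chronological, co-operative, and cross co-operative minimal-cost run $m'$ aligning $\gamma$ to $\sigma$, whose existence is guaranteed by combining Lemmas \ref{linescommute}, \ref{linescoop}, and \ref{crosscoop}. I would then rewrite $m'$ into a stable run of cost no larger, proceeding by induction from right to left on the largest position at which $m'$ violates stability: assuming the moves $m'_{i+1},\dots,m'_n$ are already stable, I substitute $m'_i = (s'_i, d'_i, i)$ by the unique stable move $(s^\star, d^\star, i)$ prescribed by the stability definition, and absorb the resulting discrepancy into a compensating correction of $m'_{i-1}$.

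At each inductive step, let $e_i = f_\sigma(i) - f_\gamma(i)$ and let $\tilde e_{i+1}$ denote the error at position $i+1$ of the partially aligned flow function just before $m'_i$ is applied. The stable stamp $s^\star$ is drawn from $\{0,\,e_i,\,-\tilde e_{i+1}\}$ according to the signs and relative magnitudes of $e_i$ and $\tilde e_{i+1}$, and the three branches of the stability definition are handled by direct computation using Lemma \ref{FlowFuncTrans} to propagate the substitution through to $d^\star$ and to the move at position $i-1$. Lemma \ref{cooplast}, applied (after reindexing) to the suffix of $\sigma$ and the partial state beginning at position $i+1$, supplies the closing argument: since the substitution only brings the current value of the flow at position $i+1$ closer to $f_\sigma(i+1)$, the cost of finishing the alignment after the substitution is no larger than the cost of finishing it in $m'$.

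The hardest part of the proof will be the bookkeeping for how the substitution ripples from position $i$ to position $i-1$ while maintaining co-operation and reverse chronology; this is exactly where the cap $|s^\star| \le \min(|e_i|, |\tilde e_{i+1}|)$ built into the stability definition earns its keep. A larger stamp would force a sign reversal in the compensating delay, breaking co-operation, while a smaller one would leave exploitable cancellation uncollected, so the stability choice is both feasible and extremal at every step. Iterating this local rewrite over all $n$ positions yields a run that is stable throughout, hence by Lemma \ref{stableunique} it is the unique stable run, and by construction it costs no more than $m'$, giving the claim.
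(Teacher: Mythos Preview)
Your high-level strategy—start from a reverse chronological, co-operative, cross co-operative minimal run and rewrite it locally into the stable one—is reasonable, and indeed not the route the paper takes. The paper instead argues by induction on the \emph{length} $n$ of the traces: it fixes the unique stable run $m$ and a minimal run $m'$, then compares them at the first place (from the right) where they diverge. If the very first delays $d_n$, $d'_n$ already differ, it peels off the rightmost position, uses cross co-operation to show the stable stamp $s_{n-1}$ minimises $|e_{n-1}-s|$ among admissible stamps, and then applies Lemma~\ref{cooplast} to the \emph{prefixes} $\gamma_s,\gamma'_s$ of length $n-1$ together with the induction hypothesis. If instead $d_n=d'_n$ (so $s_{n-1}=s'_{n-1}$), it reduces directly to a length-$(n-1)$ instance. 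Either way the recursion is on word length, not on ``rightmost instability''.

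The concrete gap in your plan is the direction of the ripple. In a reverse chronological aligning run the constraint at position $i+1$ reads $s_{i+1}+d_{i+1}-s_i = f_\sigma(i+1)-f_\gamma(i+1)$; hence replacing $s_i$ by the stable value $s^\star$ forces you to adjust $d_{i+1}$, not $m'_{i-1}$, to keep position $i+1$ aligned. That adjustment sits inside the move at position $i+1$, which you have already declared stable and co-operative, and there is no reason the perturbed pair $(s_{i+1},\,d_{i+1}+(s^\star-s_i))$ remains co-operative. Your inductive invariant (``moves at $i+1,\dots,n$ are already stable'') is therefore not preserved by the rewrite as you describe it, and the induction does not close.

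Relatedly, your appeal to Lemma~\ref{cooplast} is aimed at the wrong end of the word. That lemma compares two words differing only in their \emph{last} flow component; in the reverse chronological setting, once moves at positions $\ge i$ have been played, what remains is a \emph{prefix} problem on positions $1,\dots,i-1$ (this is exactly how the paper invokes it, on $\sigma|_{n-1}$). Applying it ``to the suffix of $\sigma$ beginning at position $i+1$'' does not bound the cost of what is actually left to do. If you want to keep your local-rewrite induction, you would need to reroute the compensation into $d_{i+1}$, control co-operation there, and then invoke Lemma~\ref{cooplast} on the appropriate prefix—at which point you are essentially reproducing the paper's Case~2 argument at a general index $i$.
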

\begin{proof}
Suppose we seek to align the linear timed trace $\gamma$ to $\sigma$, and consider a minimal cost, reverse chronological, co-operative and cross co-operative sequence of moves $m'$ that aligns $\gamma$ to $\sigma$. 
We know $m'$ exists by Lemmas \ref{linescoop} and \ref{crosscoop}. 
On the other hand, consider the unique stable, co-operative, reverse chronological sequence of moves $m$ that aligns $\gamma$ to $\sigma$, that exists by Lemma \ref{stableunique}. 

We proceed by way of induction. 
In the base case of words of length 1, the claim is clearly true as the minimal cost aligning move would just be the minimum delay move, which is also the unique stable run. 
For words of length 2, i.e., $\gamma_1\gamma_2$ aligning to $\sigma_1\sigma_2$, a reverse chronological co-operative run on the flow function is of the following form : $$m' : f_\gamma \xmapsto{(0,d_2)} (f_\gamma(1), f_\gamma(2) + d_2)$$$$ \xmapsto{(s_1,d_1)} (f_\gamma(1) + s_1 + d_1, f_\gamma(2) - s_1 + d_1) = f_\sigma$$

By reverse chronology and co-operation, we have the following equation : 
$$cost(m') = |d_2| + |s_1 + d_1| = |f_\sigma(2) - f_\gamma(2) + s_1| + |f_\sigma(1) - f_\gamma(1)|$$

Clearly the only variable in the cost is the value of $s_1$, and the stable run $m$ is the run that precisely minimises the value $|f_\sigma(2) - f_\gamma(2) + s_1|$ while maintaining co-operation, so the claim holds. 

Now, let the length of $\gamma$ be $n > 2$, and suppose the claim holds for all words of length less than $n$. 
We claim that $cost(m) \leq cost(m')$. 

The two runs must diverge at some point, as otherwise their cost is equal and the claim is obvious. There are only two main cases of interest : 
\begin{mycase}
\case The first move is the same, but the second moves are different (i.e the mixed move played at the second right-most position of the trace). 
\case The first moves of the two sequences $m$ and $m'$ (i.e the delay played on the right-most position of the trace) are different. 

\end{mycase}

If the first move where the sequences diverge are later, say at some position $i$ in the trace, then we can restrict our attention to the $i+1$ length prefix of the word and the appropriate truncations of $m$ and $m'$, and by induction hypothesis we are done. 
So, we focus on the above cases henceforth. 

\begin{mycase}
\case Let $m = (0, d_n, n)(s_{n-1}, d_{n-1}, n-1) \dots (s_1, d_1, 1)$ and 

$m' = (0, d'_n, n)(s'_{n-1}, d'_{n-1}, n-1) \dots (s'_1, d'_1, 1)$ 

As both these runs are reverse chronological, we obtain the following equalities: $$\forall i > 1 : s_i + d_i - s_{i-1} = f_\sigma(i) - f_\gamma(i) = s'_i + d'_i + s'_{i-1}$$ 
Setting $i$ to $n$ gives $d_n - s_{n-1} = d'_n - s'_{n-1} = f_\sigma(n) - f_\gamma(n)$, which implies that  since $d_n = d'_n$, $s_{n-1} = s'_{n-1}$. 

Now consider $\gamma_s = (s_{n-1}, 0, n-1)\gamma|_{n-1}$, and 

$m_s = (0, d_{n-1}, n-1)(s_{n-2}, d_{n-2}, n-2)\dots (s_1, d_1, 1)$ and 

$m'_s = (0, d'_{n-1}, n-1)(s'_{n-2}, d'_{n-2}, n-2)\dots (s_1, d_1, 1)$. 

Clearly $m_s$ is the stable run aligning $\gamma_s$ to $\sigma|_{n-1}$ and $m'_s$ is another reverse chronological co-operative aligning sequence, and $$cost(m) = |d_n| + |s_{n-1}| + cost(m_s)$$ $$cost(m') = |d'_n| + |s'_{n-1}| + cost(m'_s)|$$ and by triangle inequality and the stability of $m$, $$|d_n| + |s_{n-1}| = |d_n - s_{n-1}| = |d'_n - s'_{n-1}| \leq |d'_n| + |s_{n-1}|$$ 
Now, by induction hypothesis on $\gamma_s, m_s, m'_s$ (they're an instance of length $n-1$ words, case 2),  we deduce $cost(m_s) \leq cost(m'_s)$, by which we can conclude that $$cost(m) \leq cost(m')$$

\case Let $m = {(0, d_n, n)(s_{n-1}, d_{n-1}, n-1) \dots (s_1, d_1, 1)}$ and 
$m' = (0, d'_n, n)(s'_{n-1}, d'_{n-1}, n-1) \dots (s'_1, d'_1, 1)$ 

As both these runs are reverse chronological, we obtain the following equalities: $$\forall i > 1 : s_i + d_i - s_{i-1} = f_\sigma(i) - f_\gamma(i) = s'_i + d'_i + s'_{i-1}$$ 
Setting $i$ to $n$ gives $d_n - s_{n-1} = d'_n - s'_{n-1} = f_\sigma(n) - f_\gamma(n)$, which implies that having performed the first delay and stamp move of each run, one gets the following :  

$$f_\gamma \xmapsto{(0, d, n)(s, 0, n-1)} (f_\gamma(1), \dots f_\gamma(n-1) + s, f_\sigma(n))$$
$$f_\gamma \xmapsto{(0, d', n)(s', 0, n-1)} (f_\gamma(1), \dots f_\gamma(n-1) + s', f_\sigma(n))$$
Where $s = s_{n-1}$, $d = d_n$, $s' = s'_{n-1}$, and $d' = d'_n$. 

Now, let $e_n = f_\sigma(n) - f_\gamma(n)$, and $e_{n-1} = f_\sigma(n-1) - f_\gamma(n-1)$
Now, by stability of $m$, we have the following definition of $s$ : 
$$s = \begin{cases}
	0 & e_n\cdot e_{n-1} \geq 0 \\
	e_{n-1} & \textrm{Else If }|e_{n-1}| < |e_n| \\
	-e_n & \textrm{Otherwise}
\end{cases}$$

By the above definition, we see that $s = \argmin_{x \cdot (e_n - x) \geq 0} {|e_{n-1} - x|}$. 

Now by cross co-operation, we know that $s' \cdot (e_n - s') \geq 0$, so this means $$|e_{n-1} - s| \leq |e_{n-1} - s'| $$$$\implies |f_\sigma(n-1) - (f_\gamma(n-1) + s)| \leq |f_\sigma(n-1) - (f_\gamma(n-1) + s')|$$

Now, consider the words $\gamma_s = (\gamma_1, \dots \gamma_{n-2}, \gamma_{n-1} + s)$ and 
$\gamma'_s = (\gamma_1, \dots \gamma_{n-2}, \gamma_{n-1} + s')$ both aligning to $\sigma|_{n-1}$. We can apply Lemma \ref{cooplast} to these, and deduce that $$d_N(\gamma_s, \sigma|_{n-1}) \leq d_N(\gamma'_s, \sigma|_{n-1})$$

Now, by the induction hypothesis, as the rest of $m$ would have been the stable minimal cost run aligning $\gamma_s$ to $\sigma|_{n-1}$, along with stability of $m$, we have the following : 
$$cost(m) = |s| + |d| + d_N(\gamma_s, \sigma|_{n-1})$$$$ \leq |d-s| + d_N(\gamma'_s, \sigma|_{n-1}) \leq cost(m')$$

And hence we are done. 
\end{mycase}

\end{proof}

\subsection{Purely Timed Alignment for Sequential Process Models} 
We can now look at the proof of Theorem \ref{MixAlign}. 

\begin{theorem*}
	Given a sequential process model $(CN, p)$ of a time Petri Net $N$ and a linear observed trace $\sigma$, the word $\gamma \in \mathcal{L}(N)$ such that $f_{\gamma}(i)= \underset{x \in [Eft(t_i), Lft(t_i)]} {\argmin}\hspace{0.5em} {|x - f_\sigma(i)|}$ also has the property $\gamma = \underset{x \in \mathcal{L}(N)}{\argmin} \hspace{0.5em}{d_N(x, \sigma)}$. 
\end{theorem*}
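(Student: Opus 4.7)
The plan is to reduce the theorem to a coordinate-wise monotonicity property of $d_N$. The key observation is that for a sequential model $N$ of length $n$, one has $\gamma \in \mathcal{L}(N)$ iff $f_\gamma(i) \in [Eft(t_i), Lft(t_i)]$ for each $i$, with these constraints independent across positions. Consequently, the word $\gamma^*$ of the theorem is the coordinate-wise metric projection of $f_\sigma$ onto this product box, and writing $E_i = f_\sigma(i) - f_\gamma(i)$ and $E^*_i = f_\sigma(i) - f_{\gamma^*}(i)$, this projection property guarantees that $E^*_i$ lies on the closed segment from $0$ to $E_i$ for every $i$ and every $\gamma \in \mathcal{L}(N)$. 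In particular $|E^*_i| \leq |E_i|$, with matching signs whenever both are non-zero.

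The central step is to establish that $d_N$, viewed as a function of the error vector $(E_1, \dots, E_n)$, is non-decreasing and $1$-Lipschitz in each $|E_i|$ when the signs of $E_1, \dots, E_n$ are held fixed. By Theorem~\ref{MixAlgoCorr}, $d_N(\sigma, \gamma) = \sum_{i=1}^n |A_i|$, where $A_n = E_n$ and for $i < n$, $A_i$ equals $E_i$, $E_i + A_{i+1}$, or $0$ according to the three cases of Algorithm~\ref{MixBackwardsAlgo}. I would prove the monotonicity by induction on $n$: after handling position $n$, the problem reduces to an $(n-1)$-position sub-problem on $(E_1, \dots, E_{n-2}, E_{n-1}')$ whose modified error $E_{n-1}'$ equals $E_{n-1}$, $E_{n-1}+E_n$, or $0$ depending on the case. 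The case boundaries are continuous (both sides yield $E_{n-1}' = 0$ at the transition), and in Case~2 an increase of $|E_n|$ by $\varepsilon$ adds $\varepsilon$ to the cost at position $n$ while decreasing $|E_{n-1}'|$ by $\varepsilon$; the induction hypothesis's Lipschitz bound forces the sub-problem cost to drop by at most $\varepsilon$, producing a net non-negative change bounded by $\varepsilon$. The remaining cases and the monotonicity in the other coordinates follow more directly from the inductive hypothesis.

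Given this monotonicity, the theorem follows by an $n$-step replacement argument. Construct $\gamma = \gamma^{(0)}, \gamma^{(1)}, \dots, \gamma^{(n)} = \gamma^*$ where $\gamma^{(k)}$ agrees with $\gamma^*$ on positions $\leq k$ and with $\gamma$ on positions $> k$. Each transition $\gamma^{(k-1)} \to \gamma^{(k)}$ only changes the $k$-th flow, and by the projection property the new error $E^*_k$ is obtained from the old $E_k$ by shrinking $|E_k|$ towards $0$ (possibly crossing $0$ when $E^*_k = 0$). Combining the monotonicity with a continuity argument at $E_k = 0$ when the sign flips to $0$, one gets $d_N(\sigma, \gamma^{(k)}) \leq d_N(\sigma, \gamma^{(k-1)})$. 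Chaining these inequalities yields $d_N(\sigma, \gamma^*) \leq d_N(\sigma, \gamma)$ for every $\gamma \in \mathcal{L}(N)$, which is the theorem.

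The main obstacle lies in the inductive proof of monotonicity. The subtlety is that the algorithm's case distinctions depend on signed comparisons of $E_i$ and $A_{i+1}$, so varying a single $|E_j|$ can shift the recursion across case boundaries and alter downstream residuals; one must carefully track continuity at those boundaries and verify that the inductive $1$-Lipschitz bound is precisely strong enough to absorb the downstream decrease in $|E_{n-1}'|$ during Case~2. A minor additional care is needed in the replacement argument when $E^*_k = 0$ but $E_k \neq 0$: in that case the sign at position $k$ changes, and one relies on continuity at $E_k = 0$ rather than on the sign-fixed monotonicity directly.
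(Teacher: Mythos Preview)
Your approach is correct and genuinely different from the paper's. Both proofs ultimately exploit the same backward recursion $A_{i-1}\in\{E_{i-1},\,E_{i-1}+A_i,\,0\}$ read off from Algorithm~\ref{MixBackwardsAlgo}, and both need a ``one unit of extra cost here can offset at most one unit downstream'' fact. The paper, however, never isolates a monotonicity lemma: it fixes the claimed optimum $\gamma$ and an arbitrary competitor $\alpha$, writes their stable runs side by side with per-step costs $c_i=|s_{i-1}|+|d_i|$ and $c'_i$, and proves by backward induction the strengthened inequality $\sum_{i\ge k}c_i+dis(k)\le\sum_{i\ge k}c'_i$ where $dis(k)=\max(0,c'_k-c_k)$; the projection property $|f_\sigma(K)-f_\gamma(K)|\le|f_\sigma(K)-f_\alpha(K)|$ is invoked inside the induction step, together with a case analysis on the stable choice of $s_K$, to show that any local excess $c_K-c'_K$ is bounded by the carried-over surplus $dis(K+1)$. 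Your route instead packages the same cancellation phenomenon as a standalone coordinatewise monotonicity and $1$-Lipschitz property of $d_N$ in the error vector, and then obtains the theorem by an $n$-step replacement $\gamma\to\gamma^{(1)}\to\cdots\to\gamma^*$. Your decomposition is more modular (the monotonicity lemma is of independent interest and makes the box-projection geometry transparent), while the paper's direct two-run comparison avoids the intermediate words at the price of the somewhat opaque ``disadvantage'' bookkeeping.
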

\begin{proof}
As $(CN, p)$ is linear, we can consider the event set $E$ to be totally ordered by $G$, giving us the list $\{e_1, e_2, \dots e_n\}$ and their respective static interval constraints $\forall i \leq n : SI(p(e_i)) = [a_i, b_i]$. In this setting, clearly, $$f_{\gamma}(i)= \argmin_{x \in [a_i, b_i]} {|x - f_\sigma(i)|}$$

We claim that for all $\alpha \in \mathcal{L}(N)$ i.e $\forall \alpha : f_{\alpha}(i) \in [a_i, b_i]$, $$d_N(\sigma, \gamma) \leq d_N(\sigma, \alpha)$$

By Theorem \ref{MixAlgoCorr} we shall henceforth implicitly equate the cost of the stable run aligning a word to the mixed moves distance to the word. 
Let $m, m'$ be the stable runs aligning $\gamma, \alpha$ respectively to $\sigma$, where: 
$$m = (0, d_n, n)(s_{n-1}, d_{n-1}, n-1) \dots (s_1, d_1, 1)$$  
$$m' = (0, d'_n, n)(s'_{n-1}, d'_{n-1}, n-1) \dots (s'_1, d'_1, 1)$$ 

In addition, we know that $cost(m) = \sum_{i = 1}^{n} (|s_i| + |d_i|)$, and $cost(m') = \sum_{i = 1}^{n} (|s'_i| + |d'_i|)$. 

We break the run up after each stamp move, but not delay move, at a particular position has been performed, as below: $$\forall i > 0 : c_i = |s_{i-1} + |d_i|, c'_i = |s'_{i-1}| + |d'_i|$$ Where we let $s_0 = s'_0 = 0$ for notation's sake. Clearly $cost(m) = \sum_{i = 1}^{n} c_i$ and $cost(m') = \sum_{i = 1}^{n} c'_i$

We wish to prove that $$cost(m) \leq cost(m')$$ 

In fact, we shall prove something stronger. We define the notion of the disadvantage of $m$, which denotes in effect the amount of distance it might have to catch up in aligning due to $m'$ having done a costlier move just prior. $$dis(i) = max(0, c'_i - c_i)$$

Note of course, that each $c_i$, $c'_i$ and $dis(i)$ is nonnegative. 

Now, we claim the following : $$\forall k > 0 : \sum_{i = k}^{n} c_i + dis(k) \leq \sum_{i=k}^{n} c'_i$$
If we prove this, we will be done, as the equation for $k=1$ proves that $m$ has overall lower cost than $m'$. 

Let us proceed by induction on $n-k$. 

For the base case, $k = n$, so the equation we seek is that $c_n + dis(n) \leq c'_n$, which is clear as $dis(n) = max(0, c'_n - c_n)$. 

Now, say the claim holds for all $k > K$ for some $K < n$. We seek to now prove that $$c_K + \sum_{i = K+1}^{n} c_i + dis(K) \leq c'_K + \sum_{i = K+1}^{n} c'_i$$

Now, if $c_K \leq c'_K$ this is clear as $c_K + dis(K) = max(c_K, c'_K)$. On the other hand, suppose $c_K > c'_K$. 

By the induction hypothesis, we know that $$\sum_{i = K+1}^{n} c_i + dis(K+1) \leq \sum_{i = K+1}^{n} c'_i$$ So if we show that $c_K - c'_K \leq dis(K+1)$, we are done. 

Now, by reverse chronology, co-operation and cross co-operation of $m$ and $m'$, and the definition of $\gamma$, we know the following inequality holds : $$|s_{K}| + c_K = |s_K + d_K - s_{K-1}| = |f_\sigma(K) - f_\gamma(K)|$$ $$ \leq |f_\sigma(K) - f_\alpha(K)| =  |s'_K + d'_K - s'_{K-1}| = |s'_{K}| + c'_K$$

So, $c_K > c'_K \implies |s_K| < |s'_K|$, and moreover, $$c_K - c'_K \leq |s'_K - s_K|$$

Now, by stability of $m$, we know that $s_K$ was defined as below : 

$$s_K = \begin{cases}
0 & e_K \cdot e_{K+1} \geq 0 \\
e_K & \textrm{Else If } |e_K| < |e_{K+1}|\\
e_{K+1} & \textrm{Otherwise}
\end{cases}$$

Where $e_K =  (f_\sigma(K) - f_\gamma(K))$ and $e_{K+1} = f_\sigma(K+1) - f_\gamma(K+1) - s_{K+1}$. 

We proceed to analyse the possibilities based on the various possible stable choices of $s_K$ depending on the state of the run so far. 

Suppose firstly that $s_K = 0$. There are a number of ways this could hold. By choice of $\gamma$, $$(f_\sigma(K) - f_\gamma(K))\cdot (f_\sigma(K+1) - f_\gamma(K+1)) > 0 $$$$ \implies (f_\sigma(K) - f_\alpha(K))\cdot (f_\sigma(K+1) - f_\alpha(K+1)) > 0 \implies s'_K = 0$$

But $|s'_K| >|s_K| \geq 0 $ so this first subcase is not possible. 

Secondly, if $$(f_\sigma(K) - f_\gamma(K)) = 0 \implies c_K = 0 > c'_K$$ Which is also not possible. 

Thirdly, if $$(f_\sigma(K+1) - f_\gamma(K+1) - s_{K+1} = 0 \implies d_{K+1} = 0 $$

Now, $d_{K+1} = 0 $ and $|s'_K| > |s_K|$ implies $$dis(K+1) = |s'_K| + |d'_{K+1}| - |s_K| \geq |s'_K|  - |s_K| \geq c_K - c'_K$$ So in this case the claim holds as needed. 

Otherwise, suppose $s_K = f_\sigma(K) - f_\gamma(K)$. In this case, $c_K$ is once again equal to zero, which is not possible as by assumption $c_K > c'_K \geq 0$. 

Now, the last possibility is that $s_K = f_\gamma(K+1) - f_\sigma(K+1) -s_{K+1}$. This means that once $s_K$ is played, the $K+1$ position is perfectly aligned, so $d_{K+1} = 0$. Once again, this means $$dis(K+1) = |s'_K| + |d'_{K+1}| - |s_K| \geq |s'_K|  - |s_K| \geq c_K - c'_K$$ So in this case also, the claim holds as needed. 

Hence, we see that in every possible case, the claim holds for $k= K$ as well, so by induction we see that $$\sum_{i = 1}^{n} c_i + dis(K+1) \leq \sum_{i = 1}^{n} c'_i$$ From which we can conclude that $$cost(m) = \sum_{i = 1}^{n} c_i ) \leq \sum_{i = 1}^{n} c'_i = cost(m')$$

Hence, the result holds, and $\gamma$ is an optimal word of $\mathcal{L}(N)$ to align $\sigma$ to. 

\end{proof}
\end{document}